\documentclass[aps,prl,twocolumn,superscriptaddress,nofootinbib,floatfix]{revtex4-2}

\usepackage[T1]{fontenc}
\usepackage{amsmath,amssymb,mathtools}
\usepackage{amsthm}
\usepackage{mathrsfs}
\usepackage{graphicx}
\usepackage{algorithm}
\usepackage{algpseudocode}
\usepackage{xcolor}
\usepackage[colorlinks=true,allcolors=blue!60!black]{hyperref}

\makeatletter
\let\smtoc@section\l@section
\let\smtoc@subsection\l@subsection
\newcommand{\smtocstart}{%
  \let\l@section\smtoc@section
  \let\l@subsection\smtoc@subsection
}
\newcommand{\supplementtableofcontents}{%
  \begingroup
  \let\l@section\@gobble@tw@
  \let\l@subsection\@gobble@tw@
  \let\l@subsubsection\@gobble@tw@
  \tableofcontents
  \endgroup
}
\makeatother

\newcommand{\HH}{\mathcal{H}}
\newcommand{\Rbb}{\mathbb{R}}
\newcommand{\Cbb}{\mathbb{C}}
\newcommand{\OO}{\mathcal{O}}
\newcommand{\HL}{\mathcal{H}_{L}}
\newcommand{\HC}{\mathcal{H}_{C}}
\newcommand{\HS}{\mathcal{H}_{S}}
\newcommand{\HE}{\mathcal{H}_{E}}
\newcommand{\HRef}{\mathcal{H}_{L'}}
\newcommand{\Hr}{\mathcal{H}_{\rho}}
\newcommand{\NN}{\mathcal{N}}
\newcommand{\ACal}{\mathcal{A}}
\newcommand{\BCal}{\mathcal{B}}
\newcommand{\CCal}{\mathcal{C}}
\newcommand{\DD}{\mathcal{D}}
\newcommand{\DDtilde}{\widetilde{\mathcal{D}}}
\newcommand{\Dtilde}{\widetilde{D}}
\newcommand{\KK}{\mathcal{K}}
\newcommand{\ECal}{\mathcal{E}}
\newcommand{\LMap}{\hat{\mathcal{L}}}
\newcommand{\TCal}{\mathcal{T}}
\newcommand{\UCal}{\mathcal{U}}
\newcommand{\StiefelProd}{\mathcal{M}}
\newcommand{\Lin}{\mathsf{L}}
\newcommand{\Pos}{\mathsf{Pos}}

\newcommand{\CPTNI}{\mathsf{CPTNI}}
\newcommand{\Tp}{\mathsf{T}}
\newcommand{\opt}{\mathrm{opt}}
\newcommand{\Fe}{F_{\mathrm{e}}}
\newcommand{\Fch}{F_{\mathrm{ch}}}
\newcommand{\Fopt}{F^{\opt}}
\newcommand{\Fopto}{F^{\opt}_{1}}
\newcommand{\eps}{\varepsilon}
\newcommand{\epsopt}{\varepsilon^{\opt}}
\newcommand{\Deltaopt}{\Delta^{\opt}}
\newcommand{\smin}[1]{\sigma_{\min+}(#1)}
\newcommand{\kgap}{\kappa^{(2)}}
\newcommand{\rC}{r_{\CCal}}

\newcommand{\Ktilde}{\widetilde K}
\newcommand{\PiRho}{\Pi_{\rho}}
\newcommand{\tracelesspart}[1]{\left(#1\right)^{\circ}}
\newcommand{\ket}[1]{\lvert #1\rangle}
\newcommand{\bra}[1]{\langle #1\rvert}
\newcommand{\kb}[1]{\lvert #1\rangle\!\langle #1\rvert}
\newcommand{\dket}[1]{\lvert #1\rangle\!\rangle}
\newcommand{\dbra}[1]{\langle\!\langle #1\rvert}
\newcommand{\dnorm}[1]{\lVert #1\rVert_{\diamond}}
\newcommand{\opnorm}[1]{\lVert #1\rVert_{\infty}}
\renewcommand{\Re}{\operatorname{Re}}
\providecommand{\tr}{}\renewcommand{\tr}{\operatorname{tr}}
\newcommand{\trL}{\tr_{L}}
\newcommand{\trC}{\tr_{C}}
\newcommand{\trS}{\tr_{S}}
\providecommand{\supp}{}\renewcommand{\supp}{\operatorname{supp}}
\providecommand{\rank}{}\renewcommand{\rank}{\operatorname{rank}}
\newcommand{\Ran}[1]{\operatorname{Ran}(#1)}
\newcommand{\id}{\mathrm{id}}

\theoremstyle{plain}
\newtheorem{theorem}{Theorem}
\newtheorem{corollary}{Corollary}
\newtheorem{lemma}{Lemma}
\theoremstyle{remark}

\newcommand{\startsupplement}{}
\newcommand{\restatedname}{}
\theoremstyle{plain}
\newtheorem*{restated}{\restatedname}

\begin{document}

\title{High-Rank Encoding Can Improve Approximate Quantum Error Correction}

\author{Bikun Li}
\email{bikunli@uchicago.edu}
\affiliation{Pritzker School of Molecular Engineering, University of Chicago,
Chicago, Illinois 60637, USA}
\author{Liang Jiang}
\email{liangjiang@uchicago.edu}
\affiliation{Pritzker School of Molecular Engineering, University of Chicago,
Chicago, Illinois 60637, USA}

\begin{abstract}
Conventional quantum-code constructions encode pure logical states as pure
code states, but this restriction can sacrifice performance. We show that
intrinsic encoding randomness can improve optimal entanglement fidelity. We
bound the loss from imposing a rank-one encoder and prove it is at most
quadratic near perfect recovery after joint optimization. The optimized
advantage survives small noise perturbations. An explicit noise family
requires higher-rank encoders arbitrarily close to perfect recovery, with
every optimal encoder mapping pure inputs to mixed code states.
\end{abstract}

\maketitle

Quantum error correction~(QEC) protects logical information by encoding it in
a physical system and applying a recovery after noise. Perfect correction
restores every encoded state exactly, while approximate quantum error
correction~(AQEC) seeks the best attainable performance when perfect recovery
is unavailable. AQEC theory provides approximate correctability criteria for
subspace and subsystem encodings, together with structured near-optimal
recoveries~\cite{Leung1997,Beny2010,Mandayam2012,Li2025,NgMandayam2010,
HaydenPenington2020}.
These results do not, however, determine how general an encoder must be to
achieve optimal approximate correction.

Conventional encoder constructions use a partial isometry, which maps an
orthonormal logical basis to orthonormal physical code states, as in
stabilizer~\cite{Shor1995,Gottesman1996,Terhal2015},
nonadditive~\cite{Rains1997,GrasslRoetteler2013}, and bosonic
codes~\cite{Chuang1997,Terhal2020Bosonic}. AQEC also permits nonorthogonal
physical representatives of logical basis states, including finite-energy
Gottesman--Kitaev--Preskill codewords and finite-amplitude coherent-state
representatives of cat qubits~\cite{Gottesman2001,Grimsmo2021,Jafarzadeh2025,
Mirrahimi2014,Wang2026}. Such encodings can still be realized with a single
Kraus operator and therefore do not establish whether optimal approximate
correction can require a higher-rank encoder. More generally, a rank-one
completely positive trace-non-increasing~(CPTNI) encoder need not preserve
trace, whereas every rank-one completely positive trace preserving~(CPTP)
encoder is a partial-isometry encoder.

This rank question has one precedent in probabilistic quantum error
correction~(pQEC). The pQEC framework demands exact recovery on a postselected
branch and maximizes the corresponding success probability. Under this
restriction, higher-rank encoders may strictly outperform rank-one
encoders~\cite{Kukulski2023}. This raises the broader question of whether
unrestricted QEC schemes without an exact-recovery constraint can likewise
benefit from higher-rank encoding. Using entanglement fidelity as the objective,
we answer affirmatively and show that a higher-rank encoder can be necessary to
attain the global optimum.

Establishing this necessity is subtle. Higher-rank encoders already occur in
operator QEC when a mixed gauge state is prepared, but this does not establish
necessity because a pure gauge choice protects the same
subsystem~\cite{Kribs2005,Kribs2006,Lidar2014}. Joint encoder--decoder
optimization creates a separate difficulty. Fixing either map gives a
semidefinite program~(SDP), whereas the joint problem is biconvex and
alternating optimization has no global
guarantee~\cite{Reimpell2005,Fletcher2007,Kosut2008,Kosut2009}. Earlier
searches found rank-one optima without establishing universal
optimality~\cite{Reimpell2005}. A proof of necessity must therefore compare the
global unrestricted and rank-one optima.

At perfect recovery, Barnum, Knill, and Nielsen showed that the encoder may be
chosen as a maximal partial isometry~\cite{Barnum2000}. Because the encoding
space is assumed below to be at least as large as the logical space, this
encoder may also be taken trace preserving~(TP). Thus rank-one encoding loses
nothing for CPTP schemes at exact correction. Their result in the near-perfect
regime is not tight. For any fixed CPTP encoder, noise map, and decoder, their
result provides a partial-isometry encoder whose scheme fidelity under the same
noise map and decoder is at least the square of the original fidelity.

We strengthen and extend this benchmark by proving a state-dependent upper
bound on the infidelity increase incurred when an arbitrary CPTNI encoder is
replaced by a partial-isometry encoder while the noise map and decoder remain
fixed. Compared with the known estimate~\cite{Barnum2000}, the bound is tighter
in the regions of small and large infidelity and is quadratic near perfect
recovery. The same bound controls the fully optimized fidelity gap between
unrestricted and rank-one encoders, and any positive optimized gap persists
under sufficiently small perturbations of the noise map.

We then construct an explicit noise family whose optimized gap is strictly
positive for every nonzero parameter, including parameters arbitrarily close
to perfect recovery.
Throughout this regime, every optimal encoder has Choi matrix rank greater
than one, maps every pure logical input to a mixed state, and has nonvanishing
entropy exchange at the maximally mixed input. The encoder randomness is therefore
unavoidable at the optimum, and forbidding it strictly lowers the best
fidelity.

Related benefits from randomness occur in other quantum-information tasks.
Noisy preprocessing raises secret-key
rates~\cite{KrausGisinRenner2005,RenesSmith2007}, while shared randomness
improves fixed entanglement-purification protocols under source
uncertainty~\cite{Zang2026}. Those gains exploit task-specific nonlinear
responses to randomization. Here, for fixed noise and decoder, the fidelity
objective is linear in the encoder's Choi matrix, so a classical mixture of
encoder quantum channels cannot outperform its best constituent under that
decoder. The advantage instead arises from the intrinsic randomness of a
high-rank encoder.

The following sections introduce the setting, establish the general upper
bound and the stability of the gap under noise perturbations, and present the
high-rank family and the asymptotic coefficient of its optimized gap.

\section{Setting}\label{sec:setting}

Let $\HL=\Cbb^{d_L}$ and $\HC=\Cbb^{d_C}$ be the logical and
code spaces, with $1<d_L\leq d_C<\infty$. For a finite-dimensional Hilbert
space $\HH$, write $\Lin(\HH)$ for the space of linear operators on $\HH$.
The encoder
$\CCal:\Lin(\HL)\to\Lin(\HC)$, noise map
$\NN:\Lin(\HC)\to\Lin(\HC)$, and decoder
$\DD:\Lin(\HC)\to\Lin(\HL)$ are CPTNI maps, and the QEC scheme is the
composite map $\Phi:=\DD\circ\NN\circ\CCal$. CPTNI maps describe postselected
outcomes of quantum instruments, whereas CPTP maps form the deterministic
special case known as quantum channels. Under input-first vectorization, an
operator $M:\HH_1\to\HH_2$ is represented by
$\dket{M}\in\HH_{1'}\otimes\HH_2$, where $\HH_{1'}$ is a
reference copy of $\HH_1$. For $\HH_1=\HL$, we denote this copy by $\HRef$.
For a map $\Lambda:\Lin(\HH_1)\to\Lin(\HH_2)$ with Kraus operators $L_j$,
its input-first Choi matrix is
$J(\Lambda):=\sum_j\dket{L_j}\dbra{L_j}$~\cite{Choi1975,Jamiolkowski1972}.
The Choi matrix rank is the minimum number
of Kraus operators, and we write $\rC:=\rank J(\CCal)$. A rank-one CPTNI encoder
has the form $V(\cdot)V^{\dagger}$, with $V:\HL\to\HC$ and
$V^{\dagger}V\preceq I_L$. It is CPTP exactly when
$V^{\dagger}V=I_L$.

For a state $\rho$ on $\HL$, let $\Hr:=\supp(\rho)$ and let
$\PiRho$ be its support projector. Its canonical purification is
$\ket{\psi_{\rho}}:=\dket{\sqrt{\rho}}\in\HRef\otimes\HL$. If $F_j$ are
Kraus operators of $\Phi$, the entanglement fidelity~\cite{Schumacher1996} is
\begin{equation}\label{eq:entanglement-fidelity}
\begin{aligned}
\Fe(\rho,\Phi)
&:=\bra{\psi_{\rho}}(\id_{L'}\otimes\Phi)
(\kb{\psi_{\rho}})\ket{\psi_{\rho}}\\
&=\sum_j\left\lvert\tr(\rho F_j)\right\rvert^2.
\end{aligned}
\end{equation}
Entanglement fidelity probes only errors visible to $\rho$, so it need not
determine the entire logical map. For full rank $\rho$,
$\Fe(\rho,\DD\circ\NN\circ\CCal)=1$ forces the composite map to be the
identity, $\DD\circ\NN\circ\CCal=\id_L$, for which a partial-isometry encoder
suffices~\cite{Knill1997,Barnum2000}. When
$\DD\circ\NN\circ\CCal=p_{\mathrm{succ}}\id_L$ is imposed, as in pQEC,
$\Fe(\rho,\DD\circ\NN\circ\CCal)=p_{\mathrm{succ}}$ independently of
$\rho$. For $p_{\mathrm{succ}}\in(0,1)$, pQEC can require a high-rank
encoder~\cite{Kukulski2023}.

\section{Mind the Gap}\label{sec:theory}

We sharpen the near-perfect bound developed in Ref.~\cite{Barnum2000} by
deriving a quadratic small-infidelity estimate and extending it to general
CPTNI maps. The refined bound depends explicitly on the quantum state $\rho$
through $s:=\smin{\rho}\in(0,1]$, its smallest positive eigenvalue.
For $0\leq x\leq s$, set
\begin{equation}\label{eq:ell-definition}
\ell_s(x):=\frac{x}{1+\sqrt{1-x/s}}.
\end{equation}
The function $\ell_s$ is strictly increasing and therefore invertible on
$[0,s]$, with $0\leq\ell_s(x)\leq x$.

\begin{theorem}[improved upper bound]
\label{thm:quadratic}
Let $\rho\succeq0$ on $\HL$ satisfy $\tr\rho=1$, and let
$\ACal:=\DD\circ\NN$ and $\CCal$ be CPTNI maps. Denote
\begin{equation}\label{eq:scheme-infidelity}
\eps:=1-\Fe(\rho,\ACal\circ\CCal).
\end{equation}
A partial-isometry encoder $\CCal_1$ can then be chosen so that, with
\begin{equation}\label{eq:rank-one-infidelity}
\eps_1:=1-\Fe(\rho,\ACal\circ\CCal_1),
\end{equation}
the corresponding infidelity gap satisfies, for every $0\leq\eps\leq1$,
\begin{equation}\label{eq:specified-gap-bound}
\Delta:=\eps_1-\eps\leq
\begin{cases}
(s^{-1}-1)\ell_s(\eps)^2,
&0\leq\eps\leq s,\\
(1-\eps)\min\{\eps,1-s\},
&s<\eps\leq1
\end{cases}.
\end{equation}
\end{theorem}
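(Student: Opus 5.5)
The plan is to work in a purified picture in which the encoder's Choi rank is visible as entanglement with an encoder environment, and to build $\CCal_1$ from one dominant Schmidt branch. Let $\{C_k\}$ be Kraus operators of $\CCal$ and $\{A_j\}$ those of $\ACal$. Form
\[
\ket{\varphi}:=\sum_k(I_{L'}\otimes C_k)\ket{\psi_\rho}\otimes\ket{k}_E ,
\]
and set
\[
P:=\sum_j(I_{L'}\otimes A_j^{\dagger})\kb{\psi_\rho}(I_{L'}\otimes A_j).
\]
Since $\ACal$ is CPTNI, we have $0\preceq P\preceq I$. By \eqref{eq:entanglement-fidelity},
\[
1-\eps=\bra{\varphi}(P\otimes I_E)\ket{\varphi}.
\]
Because $\CCal$ is trace non-increasing, the reference marginal satisfies $\tr_{CE}\kb{\varphi}\preceq\rho^{\Tp}$ on $\HRef$. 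A partial-isometry encoder $V(\cdot)V^\dagger$ corresponds instead to the normalized vector $(I\otimes V)\ket{\psi_\rho}$. Its reference marginal equals $\rho^{\Tp}$ exactly, and its fidelity is $\bra{\psi_\rho}(I\otimes V^\dagger)P(I\otimes V)\ket{\psi_\rho}$. The gap is therefore a comparison between two variational problems for the same operator $P$. One is over subnormalized vectors on $\HRef\otimes\HC\otimes\HE$ with marginal bounded by $\rho^{\Tp}$. The other is over normalized product-with-$E$ vectors with marginal equal to $\rho^{\Tp}$.

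The key steps, in order, are as follows.
\begin{enumerate}
\item Choose $\ket{e}\in\HE$ maximizing $\bra{\varphi}(P\otimes\kb{e})\ket{\varphi}$. For example, take the top Schmidt vector of $(P^{1/2}\otimes I)\ket{\varphi}$. Write $\ket{\varphi}=\ket{a}\ket{e}+\ket{r}$ with $\ket{r}\perp\HRef\otimes\HC\otimes\ket{e}$, and set $x:=\lVert r\rVert^2$. The point is that a large fidelity forces $x$ to be controlled by $\eps$. Since $P\preceq I$ and $\tr\kb{\varphi}\le1$, the cross terms and the residual weight can be bounded in terms of $\eps$.
\item The branch $\ket{a}=(I\otimes C_a)\ket{\psi_\rho}$ defines a rank-one CPTNI map whose marginal satisfies $\rho^{\Tp}-\tr_C\kb{a}\preceq\tr_{CE}\kb{r}$, a positive operator of trace at most $x$.
\item Replace $C_a$ by the isometric part $V$ of a polar decomposition of $C_a\sqrt{\rho}$ on $\Hr$, extended to a full isometry using $d_L\le d_C$. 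Bound the fidelity loss from $\ket{a}$ to $(I\otimes V)\ket{\psi_\rho}$ via $\lvert\langle a|(I\otimes V)\psi_\rho\rangle\rvert$. This overlap equals $\tr\bigl((\rho^{1/2}C_a^\dagger C_a\rho^{1/2})^{1/2}\bigr)$, a root-fidelity between $\rho$ and a marginal that falls short of $\rho$ by a positive operator of trace $\le x$.
\item Since every nonzero eigenvalue of $\rho$ is at least $s$, a perturbation of trace $x$ can shrink the root-fidelity only to about $\sqrt{1-x/s}$ on the affected eigendirections. This is where $s$ and the square root in $\ell_s$ enter: note $\ell_s(x)=s\bigl(1-\sqrt{1-x/s}\bigr)$.
\end{enumerate}

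Combining the steps, optimizing the split of weight between branch and residual should give $\eps_1\le\eps+(s^{-1}-1)\ell_s(\eps)^2$ when $\eps\le s$. The factor $s^{-1}-1$ should emerge as the worst-case penalty of a rank-deficient marginal relative to $\rho$. It vanishes when $\rho$ is pure, where $s=1$ and a rank-one encoder is trivially optimal. For $\eps>s$, I would use two cruder estimates and take their minimum. First, the Barnum--Knill--Nielsen type argument gives $1-\eps_1\ge(1-\eps)^2$, i.e. $\Delta\le\eps(1-\eps)$. Second, the best branch, whose marginal is at least $s$ times the relevant support, gives $1-\eps_1\ge s(1-\eps)$, i.e. $\Delta\le(1-\eps)(1-s)$.

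I expect the main obstacle to be step 4 coupled with the optimization over $x$. The difficulty is proving the sharp inequality relating root-fidelity deficit and trace deficit with constant exactly $s$, and then showing that the worst case is a single eigendirection of weight $s$ losing mass $x$. I would first try to reduce this to a two-dimensional extremal problem. The reduction would use concavity of $t\mapsto\sqrt t$ on the eigenvalues of $\rho^{1/2}C_a^\dagger C_a\rho^{1/2}$, together with the bound $\sqrt{1-t}\ge1-t/(1+\sqrt{1-x/s})$ for $t\le x/s$, and I would check tightness against an explicit qubit example.
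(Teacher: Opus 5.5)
Your plan has two genuine gaps.

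First, Step 1 is false: a large fidelity does not make the residual weight $\lVert r\rVert^2$ off a single encoder-environment vector small. Take $\CCal(\omega)=\omega\otimes I_S/d_S$ on $\HC=\HL\otimes\HS$ and $\ACal=\trS$. Then $\eps=0$, yet the encoder environment is maximally mixed, even after weighting by $P$. So every $\ket e$ leaves $\lVert r\rVert^2=1-1/d_S$. The paper's own family likewise has $\epsopt\to0$ with $d_L$ equally weighted branches: high Choi rank is compatible with perfect recovery.

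What can be controlled is the \emph{relative} defect of a normalized branch. With $w_a=\tr(\rho C_a^\dagger C_a)$ and $f_a=\sum_i|\tr(\rho A_iC_a)|^2$, the $w_a$-weighted mean of $1-f_a/w_a$ is at most $\eps$, so some branch has $x:=1-f_a/w_a\le\eps$. After normalization, however, the branch marginal $\sqrt\rho\,\widetilde P^2\sqrt\rho$ has unit trace. The ``positive deficit of trace $x$'' that Steps 2 and 4 feed on therefore does not exist. Two smaller errors: Step 2 drops the positive term $\rho^{\Tp}-\tr_{CE}\kb\varphi$ when $\CCal$ is not TP. And the overlap in Step 3 is $\tr(\sqrt\rho\,|C_a\sqrt\rho|)$, not $\tr|C_a\sqrt\rho|$; the latter can exceed one.

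Second, and more seriously, Steps 3--4 cannot give a quadratic bound. Comparing the normalized branch $\ket{\hat a}$ with $\ket\phi=(I\otimes V)\ket{\psi_\rho}$ through their overlap $c$ yields only $\arccos\sqrt{1-\eps_1}\le\arccos\sqrt{1-x}+\arccos c$. Nothing better follows from $0\preceq P\preceq I$ and these two numbers, since a rank-one $P$ saturates it. Because $1-c$ is itself of order $x$, you get at best $1-\eps_1\gtrsim(1-2x)^2$, which is a linear excess.

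The construction also fails, not just the estimate: an isometry read off from $C_a$ and $\rho$ alone ignores the decoder. Take $\rho=\mathrm{diag}(1-s,s)$, the TP encoder $C_\pm=[2(1+t^2)]^{-1/2}V_\pm(I\pm t\sigma_y)$ with orthogonal-range isometries $V_\pm$, and the decoder $A_\pm=e^{\mp i\phi\sigma_x}V_\pm^\dagger$. Then $1-\eps=(\cos\phi+b\sin\phi)^2/(1+t^2)$ with $b=(1-2s)t$. Your polar isometry of $C_+\sqrt\rho$ is $V_+e^{i\theta'\sigma_x}$ with $\tan\theta'=t\frac{\sqrt{1-s}-\sqrt s}{\sqrt{1-s}+\sqrt s}$, and it gives $\eps_1=\sin^2(\phi-\theta')$. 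At $s=0.1$ and $\phi=2t$ this yields $\Delta\approx0.45\,t^2\approx\eps/4$, first order in $\eps$.

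The missing idea is the paper's decoder-dependent rotation. Set $K_i=\PiRho A_iV$, take the unit vector $v_i\propto\tr(\rho K_i\widetilde P)^*$, polar-decompose $\Ktilde=\sum_iv_iK_i=HU^\dagger$, and encode with $W=VU\PiRho$; in the example this gives $\eps_1=0$. Cauchy--Schwarz then gives $1-x\le\tr(\rho H^2)$ and $1-\eps_1\ge[\tr(\rho H)]^2$ for the same $0\preceq H\preceq\PiRho$. The bound $\rho\succeq s\PiRho$ converts the second moment into $\tr(\rho H)\ge1-\ell_s(x)$, and this is where $s$ and $\ell_s$ enter. Your ``single eigendirection of weight $s$'' worst case is correct, but it applies to $H$, not to the encoder marginal.

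The same two inequalities give the $\eps>s$ branch, via $\tr(\rho H)\ge\max\{\tr(\rho H^2),\sqrt{s\,\tr(\rho H^2)}\}$. Appealing to Barnum--Knill--Nielsen, which is proved for CPTP encoders, together with an unspecified ``best branch'' estimate, does not replace $U$ there.
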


\begin{proof}
At $\eps=1$, any encoder gives $\eps_1\leq1$, so $\Delta$ satisfies
Eq.~\eqref{eq:specified-gap-bound}. Suppose henceforth that $\eps<1$. Choose Kraus
operators $C_a$ and $A_i$ for $\CCal$ and $\ACal$, respectively. For each
encoder branch, let
$w_a:=\tr(\rho C_a^{\dagger}C_a)$ be its input weight and
$f_a:=\sum_i|\tr(\rho A_iC_a)|^2$ its fidelity contribution. The
Cauchy--Schwarz inequality
and $\sum_iA_i^{\dagger}A_i\preceq I_C$ give
$0\leq f_a\leq\tr[\rho C_a^{\dagger}(\sum_iA_i^{\dagger}A_i)C_a]\leq w_a$.
For $w_a>0$, let
$\eps_a:=1-f_a/w_a$ be the branch's relative defect. Trace non-increase and the
Kraus representation formula give $w:=\sum_a w_a\leq1$ and
$\sum_a f_a=1-\eps$. Since $\eps<1$ and $f_a\leq w_a$, we have $w>0$. Hence the
average of $\eps_a$ with normalized weights $w_a/w$ is
$[w-(1-\eps)]/w\leq\eps$. Choose a positive-weight branch satisfying
$\eps_a\leq\eps<1$ and set $x:=\eps_a$.

We next work on $\Hr$. Take the polar decomposition $C_a\PiRho=VP$. Since
$d_C\geq d_L$, complete the polar factor
on $\ker P$ to a partial isometry $V:\Hr\to\HC$ with initial space $\Hr$,
without changing $VP$. Thus
$V^{\dagger}V=\PiRho$. Normalize $P$ as $\widetilde P:=P/\sqrt{w_a}$. With
$K_i:=\PiRho A_iV:\Hr\to\Hr$, trace non-increase gives
$\sum_iK_i^{\dagger}K_i\preceq\PiRho$, while the support of $\rho$ gives
$\tr(\rho\widetilde P^2)=1$ and
$\sum_i|\tr(\rho K_i\widetilde P)|^2=1-x$. Since $x\leq\eps<1$, the coefficients
$v_i:=\tr(\rho K_i\widetilde P)^*/\sqrt{1-x}$ therefore satisfy
$\sum_i|v_i|^2=1$, so $v$ is a unit vector. Using these coefficients, set
$\Ktilde:=\sum_iv_iK_i$. By construction,
$\tr(\rho\Ktilde\widetilde P)=\sqrt{1-x}$. The Cauchy--Schwarz inequality
gives $\Ktilde^{\dagger}\Ktilde\preceq\sum_iK_i^{\dagger}K_i
\preceq\PiRho$, so $\Ktilde:\Hr\to\Hr$ is a contraction. Take its full
polar decomposition $\Ktilde=HU^{\dagger}$, where $H:\Hr\to\Hr$ satisfies
$0\preceq H\preceq\PiRho$ and $U:\Hr\to\Hr$ is unitary. The
Cauchy--Schwarz inequality gives
\begin{equation}\label{eq:second-moment}
\begin{aligned}
1-x&=|\tr(\rho HU^{\dagger}\widetilde P)|^2\\
&\leq\tr(\rho H^2)\tr(\rho\widetilde PUU^{\dagger}\widetilde P)
=\tr(\rho H^2).
\end{aligned}
\end{equation}

Set $W:=VU\PiRho$. Since $W^{\dagger}W=\PiRho$, it defines the
partial-isometry encoder $\CCal_1(\omega):=W\omega W^{\dagger}$.
The Kraus representation formula and the Cauchy--Schwarz inequality now give
\begin{equation}\label{eq:rounded-fidelity}
\begin{aligned}
1-\eps_1
&=\sum_i\left|\tr(\rho A_iW)\right|^2
=\sum_i\left|\tr(\rho A_iVU)\right|^2\\
&=\sum_i\left|\tr(\rho K_iU)\right|^2
\geq\left|\tr(\rho\Ktilde U)\right|^2\\
&=\left[\tr(\rho H)\right]^2.
\end{aligned}
\end{equation}

First suppose $0\leq\eps\leq s$. To turn the second-moment estimate into a
first-moment bound, write $y:=1-\tr(\rho H)$. Because
$\PiRho-H\preceq\PiRho-H^2$, we have $0\leq y\leq x\leq\eps\leq s$.
Positivity gives
$(\PiRho-H)^2\preceq(\PiRho-H)\tr(\PiRho-H)$, while
$\rho\succeq s\PiRho$ gives $\tr(\PiRho-H)\leq y/s$. Hence
$\tr[\rho(\PiRho-H)^2]\leq y^2/s$, and
$x\geq\tr[\rho(\PiRho-H^2)]\geq2y-y^2/s\equiv\ell_s^{-1}(y)$. Since
$\ell_s^{-1}$ is increasing on $[0,s]$, we obtain $y\leq\ell_s(x)$, or
equivalently $\tr(\rho H)\geq1-\ell_s(x)$. Substituting this bound into
Eq.~\eqref{eq:rounded-fidelity} yields
\begin{equation}
\begin{aligned}
1-\eps_1
&\geq[1-\ell_s(x)]^2
=1-x-(s^{-1}-1)\ell_s(x)^2\\
&\geq1-\eps-(s^{-1}-1)\ell_s(\eps)^2.
\end{aligned}
\end{equation}
The equality uses the identity implied by Eq.~\eqref{eq:ell-definition}, and
the last line follows from $x\leq\eps$ and monotonicity of $\ell_s$.
Rearranging proves the first branch of Eq.~\eqref{eq:specified-gap-bound}.

Now suppose $s<\eps<1$. From Eq.~\eqref{eq:second-moment}, we obtain
$\tr(\rho H^2)\geq1-x>0$. Since $H\preceq\PiRho$, one has
$0<\opnorm{H}\leq1$ and $H^2\preceq\opnorm{H}H$. Using
$\rho\succeq s\PiRho$ and $\tr(H^2)\geq\opnorm{H}^2$ gives
\begin{equation}
s\opnorm{H}^2\leq\tr(\rho H^2)
\leq\opnorm{H}\tr(\rho H).
\end{equation}
Together with Eq.~\eqref{eq:second-moment} and $x\leq\eps$, this gives
\begin{equation}
\tr(\rho H)\geq\frac{\tr(\rho H^2)}{\opnorm{H}}
\geq\sqrt{s\tr(\rho H^2)}
\geq\sqrt{s(1-\eps)}.
\end{equation}
Moreover, $H^2\preceq H$, Eq.~\eqref{eq:second-moment}, and $x\leq\eps$ give
$\tr(\rho H)\geq\tr(\rho H^2)\geq1-\eps$. Combining these two estimates
with Eq.~\eqref{eq:rounded-fidelity} yields
\begin{equation}
1-\eps_1\geq(1-\eps)\max\{1-\eps,s\}.
\end{equation}
Therefore, $\Delta\leq(1-\eps)\min\{\eps,1-s\}$, which proves the second
branch for $s<\eps<1$.
\end{proof}

Theorem~\ref{thm:quadratic} remains valid when $\ACal$, $\CCal$, and
$\CCal_1$ are restricted to CPTP maps. The proof is analogous and omitted.
We denote the piecewise right-hand side of Eq.~\eqref{eq:specified-gap-bound} by
$\overline{\Delta}(\eps,s)$. Fig.~\ref{fig:gap}(a) compares this upper bound
with the bound $\Delta\leq\eps(1-\eps)$ implicit in the proof of
Ref.~\cite{Barnum2000}.

Without assuming optimality,
Eqs.~\eqref{eq:scheme-infidelity}--\eqref{eq:specified-gap-bound} compare specified
encoders under a fixed decoder. They therefore do not capture
the intrinsic limitation of rank-one encoding for a given $\rho$ and noise map
$\NN$. For the joint encoder--decoder optimization, introduce
\begin{align}
\Fopt(\rho,\NN)&:=\max_{\CCal,\DD\in\CPTNI}
\Fe(\rho,\DD\circ\NN\circ\CCal),
\label{eq:joint-optimum}\\
\Fopto(\rho,\NN)&:=
\max_{\substack{\CCal,\DD\in\CPTNI\\ \rC\leq1}}
\Fe(\rho,\DD\circ\NN\circ\CCal),
\label{eq:joint-rank-one-optimum}\\
\epsopt(\rho,\NN)&:=1-\Fopt(\rho,\NN),
\label{eq:joint-infidelity}\\
\Deltaopt(\rho,\NN)&:=\Fopt(\rho,\NN)-\Fopto(\rho,\NN).
\label{eq:joint-gap}
\end{align}
Both fidelities are now fully optimized for the given $\rho$ and $\NN$. Hence
$\Deltaopt>0$ certifies that every rank-one encoder falls short of the
unrestricted optimum.
\begin{corollary}[optimized upper bound]
\label{cor:optimized-ceiling}
The optimized gap satisfies
\begin{equation}\label{eq:optimized-ceiling}
0\leq\Deltaopt(\rho,\NN)
\leq\overline{\Delta}\big(\epsopt(\rho,\NN),s\big).
\end{equation}
\end{corollary}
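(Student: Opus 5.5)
The plan is to apply Theorem~\ref{thm:quadratic} to a jointly optimal scheme. The lower bound $\Deltaopt\geq0$ is immediate: every rank-one encoder is admissible in Eq.~\eqref{eq:joint-optimum}, so $\Fopto\leq\Fopt$. Both maxima exist, since the sets of CPTNI maps are compact and $\Fe$ is continuous in the encoder and decoder. For the rank-one set, the constraint $\rC\leq1$ is the closed condition $\rank J(\CCal)\leq1$, so that set is compact too.

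For the upper bound, fix an optimal pair $(\CCal^\star,\DD^\star)$ attaining $\Fopt(\rho,\NN)$. Set $\ACal:=\DD^\star\circ\NN$, which is CPTNI as a composition of CPTNI maps. Then the infidelity $\eps$ of Theorem~\ref{thm:quadratic} equals $\epsopt(\rho,\NN)$. The theorem supplies a partial-isometry encoder $\CCal_1$ with
\begin{equation*}
\eps_1-\epsopt\leq\overline{\Delta}(\epsopt,s).
\end{equation*}
The pair $(\CCal_1,\DD^\star)$ has a rank-one encoder, so it is admissible in Eq.~\eqref{eq:joint-rank-one-optimum}. Hence $\Fopto\geq1-\eps_1$. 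Combining the two facts gives
\begin{equation*}
\Deltaopt=\Fopt-\Fopto\leq(1-\epsopt)-(1-\eps_1)=\eps_1-\epsopt\leq\overline{\Delta}(\epsopt,s).
\end{equation*}

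The only points needing care are the following.
\begin{itemize}
\item The partial-isometry encoder $W=VU\PiRho$ built in the theorem has Choi rank one and satisfies $W^{\dagger}W=\PiRho\preceq I_L$, so it is a legitimate CPTNI rank-one encoder.
\item Since $\epsopt\in[0,1]$, the bound $\overline{\Delta}(\epsopt,s)$ is defined by the piecewise formula in both regimes.
\item Attainment of the maximum is convenient but not essential. Without it, I would apply the argument to a near-optimal sequence with infidelities $\eps_n\to\epsopt$ and use continuity of $\overline{\Delta}(\cdot,s)$ to pass to the limit.
\end{itemize}

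Continuity of $\overline{\Delta}(\cdot,s)$ is the one step to check. At the junction $\eps=s$, the first branch gives $\ell_s(s)=s$, so its value is $(s^{-1}-1)s^2=s(1-s)$. The second branch gives $(1-s)\min\{s,1-s\}$, which equals $s(1-s)$ when $s\leq1/2$. When $s>1/2$ the second branch at $s^{+}$ is smaller than the first branch at $s$, so there is a downward jump; the function is then only upper semicontinuous from the left. In that case I would avoid the limit argument and rely on exact attainment of the maximum via compactness, which is the cleaner route anyway.
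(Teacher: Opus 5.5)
Your proposal is correct and follows the paper's proof: take a jointly optimal pair, apply Theorem~\ref{thm:quadratic} with $\ACal=\DD^{\star}\circ\NN$, and keep the decoder, so that the resulting rank-one pair is feasible and gives $0\leq\Deltaopt\leq\eps_1-\epsopt\leq\overline{\Delta}(\epsopt,s)$. Your closing worry is also unnecessary. The jump of $\overline{\Delta}(\cdot,s)$ at $\eps=s$ for $s>1/2$ is downward to the right, so the function is upper semicontinuous, which is all the near-optimal-sequence argument needs. Attainment by compactness, which the paper assumes, makes the point moot anyway.
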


\begin{proof}
Let $(\CCal^{\opt},\DD^{\opt})$ attain $\Fopt(\rho,\NN)$, set
$\ACal^{\opt}:=\DD^{\opt}\circ\NN$, and let $\CCal_1$ be the choice in
Theorem~\ref{thm:quadratic} for $(\ACal^{\opt},\CCal^{\opt})$. It gives a
fixed-pair loss $\Delta:=\eps_1-\epsopt(\rho,\NN)$ no larger than
$\overline{\Delta}(\epsopt(\rho,\NN),s)$. The pair
$(\CCal_1,\DD^{\opt})$ could be suboptimal, so
$0\leq\Deltaopt(\rho,\NN)\leq\Delta$. Combining the inequalities proves
Eq.~\eqref{eq:optimized-ceiling}.
\end{proof}

\begin{theorem}[Lipschitz stability]
\label{thm:gap-stability}
For fixed $\rho$ and CPTNI maps $\NN$ and $\NN'$, the optimized gap obeys
\begin{equation}\label{eq:gap-stability}
|\Deltaopt(\rho,\NN)-\Deltaopt(\rho,\NN')|
\leq2\dnorm{\NN-\NN'}.
\end{equation}
If $\NN$ and $\NN'$ are CPTP maps, the factor of two can be omitted.
\end{theorem}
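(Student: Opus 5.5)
The plan is to reduce the statement to a Lipschitz estimate for each optimized fidelity separately, $\Fopt$ and $\Fopto$, and then apply the triangle inequality. Write $\Deltaopt(\rho,\NN)-\Deltaopt(\rho,\NN')=[\Fopt(\rho,\NN)-\Fopt(\rho,\NN')]-[\Fopto(\rho,\NN)-\Fopto(\rho,\NN')]$. It then suffices to show
\[
|\Fopt(\rho,\NN)-\Fopt(\rho,\NN')|\leq\dnorm{\NN-\NN'},
\]
together with the same bound for $\Fopto$. Adding the two gives the factor of two in Eq.~\eqref{eq:gap-stability}.

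The core step is a fixed-pair estimate. For any CPTNI $\CCal,\DD$, consider the difference
\[
\Fe(\rho,\DD\circ\NN\circ\CCal)-\Fe(\rho,\DD\circ\NN'\circ\CCal)
=\bra{\psi_\rho}(\id_{L'}\otimes\DD)\bigl(X\bigr)\ket{\psi_\rho},
\]
where $X:=(\id_{L'}\otimes(\NN-\NN'))(\sigma)$ and $\sigma:=(\id_{L'}\otimes\CCal)(\kb{\psi_\rho})$. Since $\CCal$ is CPTNI, $\sigma$ is positive with trace at most one. Hence $\lVert X\rVert_1\leq\dnorm{\NN-\NN'}$, with the reference dimension $d_L\leq d_C$ covered by the diamond norm. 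The map $\id\otimes\DD$ is CPTNI and hence trace-norm contractive on Hermitian operators: split $X=X_+-X_-$ and use positivity and trace non-increase. The rank-one projector $\kb{\psi_\rho}$ has operator norm one. Therefore the absolute difference is at most $\dnorm{\NN-\NN'}$.

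Next I pass to the optima. Let $(\CCal,\DD)$ attain $\Fopt(\rho,\NN)$; the maximum exists by compactness of CPTNI maps and continuity. Evaluating this same pair on $\NN'$ gives $\Fopt(\rho,\NN')\geq\Fopt(\rho,\NN)-\dnorm{\NN-\NN'}$. Swapping the roles of $\NN$ and $\NN'$ gives the reverse inequality. The argument is identical for $\Fopto$, because the feasible set $\{\rC\leq1\}$ does not depend on the noise map, and it is also compact since the rank condition is closed. This proves the general claim.

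The main obstacle is the CPTP refinement, which requires a cancellation between the two fidelity differences rather than the triangle inequality. My first attempt would exploit trace preservation. If $\NN,\NN'$ are CPTP, then $\tr X=0$, so $\lVert X_\pm\rVert_1\leq\tfrac12\dnorm{\NN-\NN'}$. The fixed-pair difference $\bra{\psi_\rho}(\id\otimes\DD)(X_+)\ket{\psi_\rho}-\bra{\psi_\rho}(\id\otimes\DD)(X_-)\ket{\psi_\rho}$ is a difference of two numbers, each lying in $[0,\tfrac12\dnorm{\NN-\NN'}]$. Its absolute value is therefore at most $\tfrac12\dnorm{\NN-\NN'}$, so each optimized fidelity is $\tfrac12$-Lipschitz. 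Summing the two then yields the constant one. Care is needed because $\DD$ may be merely CPTNI while $\NN$ is CPTP, but the halving only uses $\tr X=0$, which holds whenever $\NN,\NN'$ are trace preserving, since $\CCal$ acts before them. This is the step I would verify most carefully.
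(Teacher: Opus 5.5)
Your proof is correct, and it has the same structure as the paper's. Both prove a fixed-pair $1$-Lipschitz bound via H\"older's inequality and the diamond norm, note that it holds uniformly over feasible sets that do not depend on the noise, pass it to $\Fopt$ and $\Fopto$, and finish with the triangle inequality.

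The only real difference is the CPTP halving step.

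\begin{itemize}
\item The paper completes $\DD$ to a TP map $\DD_{\mathrm{ex}}$ by adding a failure block. This makes the dressed Choi matrix $S$ of $\DD_{\mathrm{ex}}\circ(\NN-\NN')\circ\CCal$ traceless. It then applies the variational bound $|\tr(\hat{\Pi}S)|\leq\tfrac12\lVert S\rVert_1$, with $\hat{\Pi}$ the projector onto the success block.
\item You use the tracelessness of $X$ before the decoder acts. You split $X=X_+-X_-$ with $\tr X_\pm=\tfrac12\lVert X\rVert_1$. By positivity and trace non-increase of $\id\otimes\DD$, each term $\bra{\psi_\rho}(\id\otimes\DD)(X_\pm)\ket{\psi_\rho}$ lies in $[0,\tr X_\pm]$.
\end{itemize}

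Your step is the same variational bound, applied directly with the effect $0\preceq(\id\otimes\DD)^{\dagger}(\kb{\psi_\rho})\preceq I$. It makes the decoder completion unnecessary and shows plainly that only trace preservation of $\NN$ and $\NN'$ is used. This answers the concern you flagged at the end.

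One correction to your framing: the CPTP case does not need any cancellation between the unrestricted and rank-one differences, despite what you announced. As you then correctly do, each optimized fidelity is $\tfrac12$-Lipschitz, and the triangle inequality gives the constant one.
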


\emph{Proof sketch.}
For each fixed feasible encoder--decoder pair, entanglement fidelity is
$1$-Lipschitz with respect to the noise map. Passing this estimate to the
unrestricted and rank-one optima and applying the triangle inequality yield
the $2$-Lipschitz bound in
Eq.~\eqref{eq:gap-stability}. For CPTP noise, a TP
completion of the decoder makes the difference between the relevant Choi
matrices traceless, so the variational characterization of trace distance
halves the fidelity constants. See the Supplemental Material~(SM)~\cite{SM}.

Theorem~\ref{thm:gap-stability} shows that a positive optimized gap
persists under sufficiently small diamond-norm perturbations within the CPTNI
class. In particular, when $\Deltaopt>0$, the advantage of allowing higher-rank
encoders in the entanglement-fidelity optimization survives small changes to
the noise map. It is therefore not confined to a singular noise map, nor is it
a fine-tuned artifact. Since
$\ell_s(x)=x/2+\OO(x^2)$, the small-infidelity ceilings in
Eqs.~\eqref{eq:specified-gap-bound} and~\eqref{eq:optimized-ceiling} have
quadratic coefficient $(s^{-1}-1)/4$. The actual $\Deltaopt$ may be smaller or
vanish, so this coefficient is only a ceiling. For a family approaching exact
correction, we therefore study the asymptotic quadratic coefficient
$\kgap:=\lim_{\epsopt\to0^+}\Deltaopt/(\epsopt)^2$ whenever this limit
exists.

\begin{figure}[!t]
\centering
\includegraphics[width=\columnwidth]{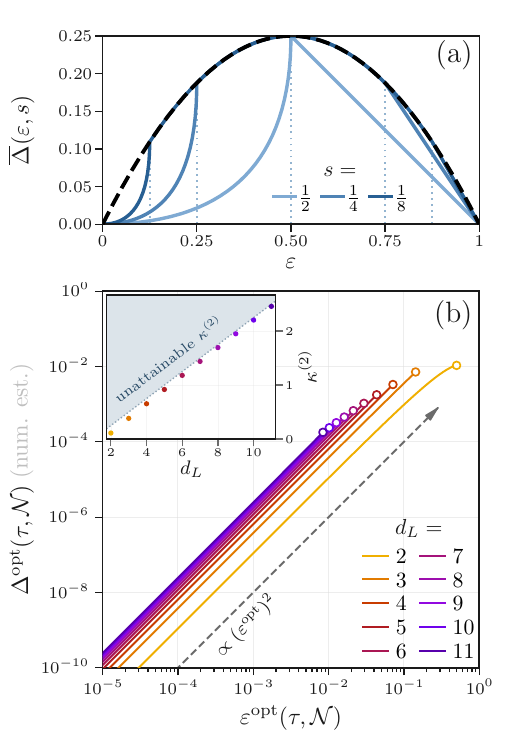}
\caption{(a) The upper bound $\overline{\Delta}(\eps,s)$ given by
Eq.~\eqref{eq:specified-gap-bound} for different values of $s$. The black dashed
curve is the earlier $\eps(1-\eps)$ bound derived from
Ref.~\cite{Barnum2000}.
(b) Numerical estimates of the optimized gap versus optimized infidelity for
the noise maps constructed below and different values of $d_L$, with
$d_C=d_L^2$. The inset compares the finite-error ratios of the plotted gaps to
$(\epsopt)^2$ with the asymptotic ceiling $(d_L-1)/4$. The
exact limiting coefficient is given by Eq.~\eqref{eq:example-kappa}.}
\label{fig:gap}
\end{figure}

\section{A High-Rank Example}\label{sec:example}

We provide a concrete family that encodes one logical qudit into two physical
qudits and whose $\kgap$ approaches the aforementioned upper bound as $d_L$
grows. Detailed derivations of the construction and its optimality properties
are given in the SM~\cite{SM}. Set
$\HS=\Cbb^{d_L}$ and $\HC=\HL\otimes\HS$, so $d_C=d_L^2$, and use the
maximally mixed input $\tau:=I_L/d_L$. For this input, $s=1/d_L$ and the
ceiling coefficient in Eq.~\eqref{eq:optimized-ceiling} is $(d_L-1)/4$.
Let $\ket{\psi_\tau}:=d_L^{-1/2}\sum_{i=0}^{d_L-1}\ket{i}_L\otimes\ket{i}_S\in\HC$
be maximally entangled.
For $0\leq p<1$, set
\begin{equation*}
\begin{gathered}
\lambda_1:=1-\frac{p}{d_L^2-2},\quad
\lambda_2:=\frac{(1-\lambda_1)(d_L^2\lambda_1-1)}
{d_L^2\lambda_4},\\
\lambda_3:=\frac{2(1-\lambda_1)^2}{d_L^2\lambda_4},\quad
\lambda_4:=\frac{d_L^2-1+p}{d_L^2},\quad
\lambda_5:=\frac{1-p}{d_L^2}.
\end{gathered}
\end{equation*}
They obey $\lambda_1\geq\lambda_2\geq\lambda_3\geq0$ and
$\lambda_1+\lambda_2+(d_L^2-1)\lambda_3
=\lambda_4+\lambda_5=1$.
Set $\hat{\pi}_C=\kb{\psi_\tau}_{L\otimes S}$ and define
\begin{equation}\label{eq:example-generator}
Q_p:=
\sqrt{d_L}\left[
\sqrt{\frac{\lambda_4}{d_L^2-1}}(I_C-\hat{\pi}_C)
+\sqrt{\lambda_5}\hat{\pi}_C\right].
\end{equation}
Before defining the noise map, consider the following TP encoder and decoder:
\begin{equation}\label{eq:example-pair}
\CCal_p^{\opt}(\rho):=Q_p(\rho\otimes I_S)Q_p,
\qquad \DD^{\opt}:=\trS,
\end{equation}
where $\rho\in\Lin(\HL)$ and $\trS$ denotes the partial trace over $\HS$.
Since $Q_p\succ0$, $\CCal_p^{\opt}(\rho)$ has rank $d_L$ whenever $\rho$ is
pure, so this encoder maps every pure logical state to a mixed code state.
Write the Kraus operators of $\CCal_p^{\opt}$ and $\DD^{\opt}$ as
$C_i(p):=Q_p(I_L\otimes\ket{i}_S)$ and
$D_j:=I_L\otimes\bra{j}_S$. Set $B_{ij}(p):=C_i(p)D_j$, let $P_p$ be the
orthogonal projector onto
$\operatorname{span}\{\dket{B_{ij}(p)}\}_{i,j=0}^{d_L-1}
\subseteq\HC\otimes\HC$, and write
$P_p^{\perp}:=I_{C\otimes C}-P_p$. Let $X_p$ be given by
\begin{equation}\label{eq:example-xp}
X_p=\lambda_3 I_C+
\frac{\lambda_2}{\lambda_4}\hat{\pi}_C.
\end{equation}
The noise map $\NN_p$ is specified through the Choi matrix
\begin{equation}\label{eq:example-noise-main}
J(\NN_p^{\dagger})=\lambda_1P_p
+P_p^{\perp}(I_C\otimes X_p)P_p^{\perp},
\end{equation}
which defines a unital and covariant quantum channel.
To express its action explicitly, write
$\ket{ij}_C:=\ket{i}_L\otimes\ket{j}_S$ and introduce the Hilbert--Schmidt
orthonormal operators
\begin{equation}\label{eq:example-eij}
E_{ij}(p):=
\frac{\ket{ij}_C\!\bra{\psi_\tau}-\sqrt{\lambda_5}B_{ij}(p)^{\dagger}}
{\sqrt{\lambda_4}}.
\end{equation}
For every $\omega\in\Lin(\HC)$, the noise channel $\NN_p$ acts as follows:
\begin{equation}\label{eq:example-noise-action}
\begin{aligned}
\NN_p(\omega)={}&\lambda_3\tr(\omega)I_C
+(\lambda_1-\lambda_3)
\trS(Q_p\omega Q_p)\otimes I_S\\
&+\lambda_2\sum_{i,j=0}^{d_L-1}
E_{ij}(p)\omega E_{ij}(p)^{\dagger}.
\end{aligned}
\end{equation}
For any CPTNI encoder--decoder pair, set
$\Phi=\DD\circ\NN_p\circ\CCal$, $\BCal=\CCal\circ\DD$, and
$Z_p:=\lambda_1I_{C\otimes C}-J(\NN_p^{\dagger})\succeq0$. Then
\begin{equation}\label{eq:example-certificate}
\Fe(\tau,\Phi)=\frac{\lambda_1}{d_L^2}\tr J(\BCal)-\frac{1}{d_L^2}\tr[Z_pJ(\BCal)]\leq\lambda_1.
\end{equation}
The pair $(\CCal_p^{\opt},\DD^{\opt})$ saturates this bound throughout
$0\leq p<1$,
yielding
$\Fopt(\tau,\NN_p)=\lambda_1$ and
$\epsopt(\tau,\NN_p)=1-\lambda_1=p/(d_L^2-2)$. Equality forces
$\BCal$ to be TP and
$\supp(J(\BCal))\subseteq\ker Z_p$.
For $p>0$, the TP constraint within this support uniquely fixes
$\BCal$ to $\BCal_p:=\CCal_p^{\opt}\circ\DD^{\opt}$. Since
$\BCal_p(I_C)=d_L Q_p^2$ has rank $d_L^2$, whereas
$\rank[\BCal(I_C)]\leq d_L r_{\CCal}$, every encoder in an unrestricted
optimal pair has $r_{\CCal}\geq d_L$ for $p>0$, and thus
$\Deltaopt(\tau,\NN_p)>0$.
For $d_L=2$, this family is related to a pQEC construction in
Ref.~\cite{Kukulski2023}, but our decoder averages over all outcome branches
and our optimization imposes no pQEC constraint.

For $p\in(0,1)$, every encoder in an unrestricted optimal pair is of the form
$\CCal_p^{\opt}\circ\UCal$ for a unitary channel $\UCal$.
Since $\tau$ is invariant under $\UCal$ and
$\tr[\tau C_i(p)^{\dagger}C_j(p)]=\delta_{ij}/d_L$, every such encoder has
entropy exchange $\log_2 d_L$ on $\tau$. Thus high-rank encoding with
nonvanishing randomness remains necessary as $p\to0^+$, i.e., $\epsopt\to0^+$.
Remarkably, $\{C_i(p)^{\dagger}C_j(p)\}_{i,j}$ is found to be a linearly
independent set for $p\in(0,1)$, so $\CCal_p^{\opt}$ sits at an extreme point
of the CPTP set according to Choi's extremality criterion for quantum
channels~\cite{Choi1975}. In other words, such a high-rank encoder cannot be
decomposed as a nontrivial mixture of quantum channels. We expect that, in
more general cases where high-rank encoding prevails, the optimal encoder can
be chosen as an extreme point of the CPTP set, as no perturbation of the
encoder will improve the objective.

For the family $(\tau,\NN_p)$ constructed in this section, the
SM~\cite{SM} derives a tangent-space lower bound for the $\rC=1$
encoder--decoder optimization.
When applied to the unrestricted optimal scheme
$(\CCal_p^{\opt},\DD^{\opt})$, the
construction in the proof of Theorem~\ref{thm:quadratic} produces a feasible
rank-one encoder path that saturates this bound through quadratic order.
Consequently, the asymptotic quadratic gap coefficient for this model is
\begin{equation}\label{eq:example-kappa}
\kgap=\frac{d_L-1}{4}\left(\frac{d_L^2-2}{d_L^2-1}\right)^2.
\end{equation}
This analysis does not determine higher-order corrections to the optimized gap.
Fig.~\ref{fig:gap}(b) provides numerical estimates of $\Deltaopt$ at finite $p$.

\section*{Summary and outlook}

By constructing a new rank-one encoder, we derived a tighter general quadratic
upper bound on the optimized fidelity advantage of unrestricted over rank-one
encoders near perfect recovery and proved that any positive advantage persists
under sufficiently small noise perturbations. Our explicit family realizes
this advantage arbitrarily close to perfect recovery
while requiring genuinely high-rank optimal encoders with nonvanishing
randomness. The exact quadratic coefficient for this family approaches the
upper bound as the logical dimension grows.

Future work could identify practical criteria for when noise favors
high-rank encoders, extend the analysis to infinite-dimensional systems,
especially continuous-variable settings, and explore quantum resource-theoretic
interpretations of encoder Choi matrix rank and its operational advantage.

\section*{Acknowledgement}

B.L.\ thanks Yuqi Li, Mahadevan Subramanian, and Yat Wong for useful
discussions and comments. We acknowledge the use of AI for technical discussions
and some assistance in prose. We acknowledge support from the U.S.\ Army
Research Office~(ARO)
under Award No.~W911NF-23-1-0077, the ARO Multidisciplinary University Research
Initiative~(MURI) under Award No.~W911NF-21-1-0325, the Air Force Office of
Scientific Research~(AFOSR) MURI under Award Nos.~FA9550-21-1-0209 and
FA9550-23-1-0338, the Office of Naval Research~(ONR) MURI under Award
No.~N000142612102, the Defense Advanced Research Projects Agency~(DARPA) under
Award No.~HR0011-24-9-0361, and the National Science Foundation~(NSF) under
Award Nos.~ERC-1941583, OMA-2137642, OSI-2326767, CCF-2312755, and
OSI-2426975.
This material is based upon work supported by the U.S.\ Department of Energy,
Office of Science, National Quantum Information Science Research Centers and
Advanced Scientific Computing Research~(ASCR) program under contract number
DE-AC02-06CH11357 as part of the InterQnet quantum networking project.

\clearpage
\onecolumngrid
\newpage
\startsupplement

\begin{center}
{\large\textbf{Supplemental Material for ``High-Rank Encoding Can Improve
Approximate Quantum Error Correction''}}\\[10pt]
Bikun Li and Liang Jiang\\[2pt]
{\itshape Pritzker School of Molecular Engineering, University of Chicago,
Chicago, Illinois 60637, USA}
\end{center}

\setcounter{equation}{0}
\setcounter{figure}{0}
\setcounter{table}{0}
\setcounter{section}{0}
\setcounter{lemma}{0}
\setcounter{corollary}{0}
\renewcommand{\theequation}{S\arabic{equation}}
\renewcommand{\theHequation}{S\arabic{equation}}
\renewcommand{\thefigure}{S\arabic{figure}}
\renewcommand{\thetable}{S\arabic{table}}
\renewcommand{\thesection}{S\arabic{section}}
\renewcommand{\thesubsection}{\thesection.\Alph{subsection}}
\renewcommand{\thelemma}{S\arabic{lemma}}
\renewcommand{\theHlemma}{S\arabic{lemma}}
\renewcommand{\thecorollary}{S\arabic{corollary}}
\renewcommand{\theHcorollary}{S\arabic{corollary}}
\setcounter{secnumdepth}{2}

\addtocontents{toc}{\protect\smtocstart}
\supplementtableofcontents
\medskip

\section{Conventions and structural facts}\label{sec:sm-structure}

For a finite-dimensional Hilbert space $\HH$, let
$\Pos(\HH):=\{X\in\Lin(\HH):X\succeq0\}$ denote its positive-semidefinite
cone.

With the input-first vectorization convention of the main text, the canonical
purification obeys, for $\rho,K\in\Lin(\HL)$ with $\rho$ a state,
\begin{equation}\label{eq:canonical-check}
\tr_{L'}\dket{\sqrt\rho}\dbra{\sqrt\rho}=\rho,
\qquad
\dbra{\sqrt\rho}(I_{L'}\otimes K)\dket{\sqrt\rho}=\tr(\rho K).
\end{equation}
These identities give its normalization and the formula in terms of Kraus operators in
Eq.~\eqref{eq:entanglement-fidelity}.

For a linear map
$\Phi:\Lin(\HH_{\mathrm{in}})\to\Lin(\HH_{\mathrm{out}})$ and
$\omega\in\Pos(\HH_{\mathrm{in}})$, the dressed Choi matrix of $\Phi$ in
the chosen input basis is
\begin{equation}\label{eq:sm-dressed-choi}
J_{\omega}(\Phi):=
((\sqrt{\omega})^{\Tp}\otimes I_{\mathrm{out}})J(\Phi)
((\sqrt{\omega})^{\Tp}\otimes I_{\mathrm{out}}),
\end{equation}
where $(\cdot)^{\Tp}$ denotes transposition in that basis. If $\Phi$ is
completely positive trace-non-increasing~(CPTNI), then
$J_{\omega}(\Phi)\succeq0$ and
$\tr J_{\omega}(\Phi)\leq\tr\omega$, with equality in the trace bound when
$\Phi$ is trace preserving~(TP).

For full rank $\rho$, let the CPTNI map $\Phi=\DD\circ\NN\circ\CCal$ have
Kraus operators $L_i$. Applying the Cauchy--Schwarz inequality to the
Hilbert--Schmidt inner product with $X=\sqrt\rho$ and $Y=L_i\sqrt\rho$ and
using trace non-increase, we obtain
$\Fe(\rho,\Phi)\leq\sum_i\tr(\rho L_i^{\dagger}L_i)\leq1$.
Equality in the summed bound saturates each Cauchy--Schwarz inequality, so
$L_i\sqrt\rho=c_i\sqrt\rho$. Full rank then gives $L_i=c_iI_L$ and
$\Phi=\id_L$. The standard Knill--Laflamme conditions and
channel-reversibility results then yield the exact-QEC structure stated in the
main text, with every branch corrected by the same
decoder~\cite{Knill1997,Nielsen1998,Barnum2000}.

\section{Lipschitz stability}\label{sec:sm-stability}

\renewcommand{\restatedname}{Theorem~\ref{thm:gap-stability} (Lipschitz stability)}
\begin{restated}
For fixed $\rho$ and CPTNI maps $\NN$ and $\NN'$, the optimized gap obeys
\begin{equation}
|\Deltaopt(\rho,\NN)-\Deltaopt(\rho,\NN')|
\leq2\dnorm{\NN-\NN'}.
\end{equation}
If $\NN$ and $\NN'$ are completely positive trace preserving~(CPTP)
maps, the factor of two can be omitted.
\end{restated}

\begin{proof}
Fix a CPTNI encoder $\CCal$ and decoder $\DD$, and abbreviate
$f_{\CCal}(\NN):=\Fe(\rho,\DD\circ\NN\circ\CCal)$.
The dressed Choi matrix
$J_{\rho}(\DD\circ(\NN-\NN')\circ\CCal)$ is Hermitian but need not
be traceless. Since a
completely positive trace-non-increasing map has diamond norm at most one,
H\"older's inequality, submultiplicativity of the diamond norm, and its
definition give
\begin{equation}\label{eq:fixed-lipschitz}
\begin{aligned}
\left\lvert f_{\CCal}(\NN)-f_{\CCal}(\NN')\right\rvert
&=\left\lvert\tr\left[
\kb{\psi_\rho}J_{\rho}(\DD\circ(\NN-\NN')\circ\CCal)
\right]\right\rvert\\
&\leq
\left\|J_{\rho}(\DD\circ(\NN-\NN')\circ\CCal)\right\|_1
\leq\dnorm{\DD\circ(\NN-\NN')\circ\CCal}\\
&\leq\dnorm{\NN-\NN'}.
\end{aligned}
\end{equation}
The bound in Eq.~\eqref{eq:fixed-lipschitz} is uniform over all feasible pairs.
It therefore passes to the suprema over the unrestricted and rank-one feasible
sets, and interchanging $\NN$ and $\NN'$ gives the reverse inequalities. Thus
\begin{subequations}\label{eq:optimized-fidelity-lipschitz}
\begin{align}
|\Fopt(\rho,\NN)-\Fopt(\rho,\NN')|
&\leq\dnorm{\NN-\NN'},
\label{eq:unrestricted-fidelity-lipschitz}\\
|\Fopto(\rho,\NN)-\Fopto(\rho,\NN')|
&\leq\dnorm{\NN-\NN'},
\label{eq:rank-one-fidelity-lipschitz}\\
|\Deltaopt(\rho,\NN)-\Deltaopt(\rho,\NN')|
&\leq2\dnorm{\NN-\NN'}.
\label{eq:optimized-gap-lipschitz}
\end{align}
\end{subequations}

If $\NN$ and $\NN'$ are CPTP maps, let $D_k:\HC\to\HL$ be Kraus
operators of $\DD$. On a failure space $\HH_F\cong\HC$, set
\begin{equation}\label{eq:sm-decoder-completion}
F_D:=\left(I_C-\sum_kD_k^{\dagger}D_k\right)^{1/2},
\qquad
\DD_{\mathrm{ex}}(X):=\DD(X)\oplus F_DXF_D^{\dagger}.
\end{equation}
Through a fixed identification $\HC\cong\HH_F$, the second block maps into the
orthogonal failure space, and
$\DD_{\mathrm{ex}}:\Lin(\HC)\to\Lin(\HL\oplus\HH_F)$. Both blocks are completely
positive~(CP), and $\sum_kD_k^{\dagger}D_k+F_D^{\dagger}F_D=I_C$, so
$\DD_{\mathrm{ex}}$ is CPTP, with $\DD(X)$ as its success block.

Embed the purification projector in the success block as
$\hat{\Pi}_{\mathrm{pass}}:=\kb{\psi_\rho}\oplus0_{\HRef\otimes\HH_F}$ and set
$S:=J_{\rho}(\DD_{\mathrm{ex}}\circ(\NN-\NN')\circ\CCal)$. Because
$\NN-\NN'$ is trace annihilating, $S$ is Hermitian and traceless. The
variational trace distance bound therefore gives
\begin{equation}\label{eq:fixed-lipschitz-cptp}
|f_{\CCal}(\NN)-f_{\CCal}(\NN')|
=|\tr(\hat{\Pi}_{\mathrm{pass}}S)|\leq\frac{1}{2}\|S\|_1
\leq\frac{1}{2}\dnorm{\DD_{\mathrm{ex}}\circ(\NN-\NN')\circ\CCal}
\leq\frac{1}{2}\dnorm{\NN-\NN'}.
\end{equation}
Here the equality uses the zero failure support of $\hat{\Pi}_{\mathrm{pass}}$,
and the inequalities follow from the variational trace-distance bound for a
traceless Hermitian operator and an effect, the diamond-norm definition, and
submultiplicativity. Taking the same suprema gives bounds with coefficient one
half corresponding to Eqs.~\eqref{eq:unrestricted-fidelity-lipschitz}
and~\eqref{eq:rank-one-fidelity-lipschitz}. The triangle inequality then gives
the bound with coefficient one in Eq.~\eqref{eq:optimized-gap-lipschitz}, as
stated in the theorem.
\end{proof}

\section{The symmetric high-rank construction}\label{sec:sm-family}

To restate the construction for general logical dimension, let
$d_S=d_L\geq2$, $d_C=d_L^2$, $\HC=\HL\otimes\HS$, and $\tau:=I_L/d_L$.
After identifying $\HRef\otimes\HL$ with $\HC$, the canonical purification
$\ket{\psi_\tau}:=\dket{\sqrt\tau}\in\HC$ is maximally entangled. Set
$\hat{\pi}_C=\kb{\psi_\tau}_{L\otimes S}$. For
$0\leq p<1$, the five coefficients are
\begin{equation}\label{eq:sm-family-lambdas}
\begin{aligned}
\lambda_1&:=1-\frac{1-d_L^2\lambda_5}{d_L^2-2},&
\lambda_2&:=\frac{(1-\lambda_1)(d_L^2\lambda_1-1)}
{d_L^2\lambda_4},&
\lambda_3&:=\frac{2(1-\lambda_1)^2}{d_L^2\lambda_4},\\
\lambda_4&:=\frac{d_L^2-1+p}{d_L^2},&
\lambda_5&:=\frac{1-p}{d_L^2}.
\end{aligned}
\end{equation}
The five coefficients satisfy
$\lambda_1\geq\lambda_2\geq\lambda_3\geq0$,
$\lambda_4>\lambda_5>0$,
$\lambda_1+\lambda_2+(d_L^2-1)\lambda_3=1$, and
$\lambda_4+\lambda_5=1$.

Let $Q_p$ be given by
\begin{equation}\label{eq:sm-family-generator}
Q_p:=
\sqrt{d_L}\left[
\sqrt{\frac{\lambda_4}{d_L^2-1}}(I_C-\hat{\pi}_C)
+\sqrt{\lambda_5}\hat{\pi}_C\right].
\end{equation}
Its eigenvalues are
$\sqrt{d_L\lambda_4/(d_L^2-1)}$ and $\sqrt{d_L\lambda_5}$, so
$Q_p\succ0$ throughout this interval.

The encoder and decoder have Kraus operators
\begin{equation}\label{eq:sm-family-pair}
C_i(p):=Q_p(I_L\otimes\ket{i}_S),
\qquad
D_j:=I_L\otimes\bra{j}_S,
\end{equation}
for $i,j=0,\ldots,d_L-1$. Then $\CCal_p^{\opt}$ and $\DD=\trS$ are quantum
channels, where $\CCal_p^{\opt}(\omega)=Q_p(\omega\otimes I_S)Q_p$ for every
$\omega\in\Lin(\HL)$. Let $B_{ij}(p):=C_i(p)D_j$, let $P_p$ be the
orthogonal projector onto $\operatorname{span}\{\dket{B_{ij}(p)}\}$, and write
$P_p^{\perp}=I-P_p$. The vectors
$\{\dket{B_{ij}(p)}\}_{i,j=0}^{d_L-1}$ form an orthonormal basis of
$\Ran{P_p}$, and
$\tr_1P_p=d_L Q_p^2$, where $\tr_1$ denotes the partial trace over the first
subsystem of $\HC\otimes\HC$.

Let $X_p$ be given by
\begin{equation}\label{eq:sm-xp-closed}
X_p=\lambda_3 I_C+
\frac{\lambda_2}{\lambda_4}\hat{\pi}_C.
\end{equation}
It satisfies
\begin{equation}\label{eq:sm-xp-system}
\tr_1\left[P_p^{\perp}(I_C\otimes X_p)P_p^{\perp}\right]
=I_C-d_L\lambda_1Q_p^2.
\end{equation}
The noise map and its certificate operator are given by
\begin{align}
J(\NN_p^{\dagger})
&:=\lambda_1P_p+P_p^{\perp}(I_C\otimes X_p)P_p^{\perp},
\label{eq:sm-family-noise}\\
Z_p&:=\lambda_1I_{C\otimes C}-J(\NN_p^{\dagger})
=P_p^{\perp}(I_C\otimes M_p)P_p^{\perp},
\label{eq:sm-family-z}\\
M_p&:=\lambda_1I_C-X_p.
\label{eq:sm-family-weight}
\end{align}
Together with $\tr_1P_p=d_L Q_p^2$, Eq.~\eqref{eq:sm-xp-system} gives
$\tr_1J(\NN_p^{\dagger})=I_C$. Under the input-first Choi matrix convention, this
means that $\NN_p^{\dagger}$ is unital, equivalently that $\NN_p$ is TP.
For $d_L\geq2$ and $0<p<1$, Eqs.~\eqref{eq:sm-xp-closed}
and~\eqref{eq:sm-family-weight} show that the smallest eigenvalue of $M_p$ is
$m_p:=\lambda_1-\lambda_3-\lambda_2/\lambda_4$, which is positive by direct
substitution, so $M_p\succeq m_pI_C\succ0$. From
Eq.~\eqref{eq:sm-xp-closed}, we also obtain $X_p\succ0$ and hence
$J(\NN_p^{\dagger})\succeq0$, while
Eq.~\eqref{eq:sm-family-z} gives $Z_p\succeq0$ and
\begin{equation}\label{eq:sm-z-kernel}
\ker Z_p=\Ran{P_p}
=\operatorname{span}\{\dket{B_{ij}(p)}\}.
\end{equation}
At $p=0$, the formulas give $Q_0=I_C/\sqrt{d_L}$, $X_0=0$,
$J(\NN_0^\dagger)=P_0$, and $Z_0=P_0^\perp$, so positivity and the kernel
characterization persist, and the resulting channel
$\NN_0(\omega)=\trS(\omega)\otimes I_S/d_L$ is perfectly recoverable with
$V_0=I_L\otimes\ket0$.
Thus $\NN_p$ is CPTP throughout $0\leq p<1$.

\subsection{Optimality certificate}
\label{sec:sm-certificate}

For the maximally mixed input $\tau$, define the channel fidelity by
$\Fch(\Phi):=\Fe(\tau,\Phi)$. For any CPTNI encoder--decoder pair, this
fidelity depends only on the CPTNI map $\BCal=\CCal\circ\DD$ on $\HC$. Since
$J(\BCal)\succeq0$ and $\tr J(\BCal)\leq d_L^2$,
\begin{equation}\label{eq:sm-certificate-bound}
\Fch(\DD\circ\NN_p\circ\CCal)=\frac{\lambda_1}{d_L^2}\tr J(\BCal)
-\frac{1}{d_L^2}\tr[Z_pJ(\BCal)]
\leq\lambda_1.
\end{equation}
For the pair in Eq.~\eqref{eq:sm-family-pair}, $\BCal_p$ is TP and every
$\dket{B_{ij}(p)}$ lies in $\ker Z_p$, so the bound is saturated. Hence
\begin{equation}\label{eq:sm-family-optimum}
\Fopt(\tau,\NN_p)=\lambda_1,
\qquad
\epsopt(\tau,\NN_p)=1-\lambda_1=\frac{p}{d_L^2-2}.
\end{equation}

Equality in Eq.~\eqref{eq:sm-certificate-bound} forces
$\tr J(\BCal)=d_L^2$ and $\tr[Z_pJ(\BCal)]=0$. The first equality makes the
CPTNI map $\BCal$ TP, while positivity of $Z_p$ and $J(\BCal)$ makes the
second imply $\supp(J(\BCal))\subseteq\ker Z_p$. Any Kraus representation
$\{\widetilde{B}_\ell\}_\ell$ of $\BCal$ satisfies
$\operatorname{span}\{\dket{\widetilde{B}_\ell}\}_\ell
=\supp(J(\BCal))$, so Eq.~\eqref{eq:sm-z-kernel} gives
$\dket{\widetilde{B}_\ell}\in
\operatorname{span}\{\dket{B_{ij}(p)}\}_{i,j}$ for every $\ell$. We may thus
write
$\widetilde{B}_\ell=\sum_{i,j}h_{ij}^{(\ell)}B_{ij}(p)$ and set
$H_{ii'}^{jj'}:=\sum_\ell(h_{ij}^{(\ell)})^*h_{i'j'}^{(\ell)}$. Using
$B_{ij}(p)^{\dagger}B_{i'j'}(p)
=(Q_p^2)_{ii'}\otimes\ket{j}_S\!\bra{j'}_S$,
matching the coefficients of the linearly independent matrix units
$\ket{j}_S\!\bra{j'}_S$ in the TP condition
$\sum_\ell\widetilde{B}_\ell^{\dagger}\widetilde{B}_\ell=I_C$ gives
\begin{equation}\label{eq:sm-trace-system}
\sum_{i,i'}H_{ii'}^{jj'}(Q_p^2)_{ii'}=\delta_{jj'}I_L,
\end{equation}
where
\begin{equation}\label{eq:sm-qp-blocks}
(Q_p^2)_{ij}
=\frac{d_L\lambda_4}{d_L^2-1}\delta_{ij}I_L
-\frac{p}{d_L^2-1}\ket i\!\bra j,
\qquad
(Q_p^2)_{ij}:\HL\to\HL,
\quad 0\leq i,j<d_L.
\end{equation}
For $p\in(0,1)$, every off-diagonal block is a nonzero multiple of a distinct
matrix unit. The diagonal entries then force all coefficients of the diagonal blocks
in a vanishing linear combination to be zero. Hence these $d_L^2$ blocks are
linearly independent. Since $C_i(p)^{\dagger}C_j(p)=(Q_p^2)_{ij}$, Choi's
extremality criterion for quantum channels~\cite{Choi1975} shows that
$\CCal_p^{\opt}$ is an extreme point of the convex set of CPTP maps for $p\in(0,1)$.
The linear independence also makes Eq.~\eqref{eq:sm-trace-system} have a
unique solution for $H$. In the
orthonormal basis $\{\dket{B_{ij}(p)}\}_{i,j}$ of
$\Ran{P_p}$, the matrix entries of $J(\BCal)$ are
$(H_{ii'}^{jj'})^*$. Since
$\BCal_p:=\CCal_p^{\opt}\circ\DD^{\opt}$ supplies a solution,
uniqueness gives
$J(\BCal)=J(\BCal_p)$ and hence $\BCal=\BCal_p$. Thus every
optimal pair has composite $\BCal_p$ for $p\in(0,1)$.

\subsection{General form of optimal encoder--decoder pairs}
\label{sec:sm-optimal-pairs}

For $p\in(0,1)$, the preceding subsection shows that every optimal pair has
composite $\BCal_p=\CCal_p^{\opt}\circ\DD^{\opt}$. Consider an arbitrary CPTNI
factorization $\CCal\circ\DD'=\CCal_p^{\opt}\circ\trS$. We show that its only
freedom is a unitary channel on $\HL$. Since $\CCal$ and $\DD'$ are CPTNI
while their composite is TP,
\begin{equation}
I_C=\DD'^{\dagger}(\CCal^{\dagger}(I_C))
\preceq\DD'^{\dagger}(I_L)\preceq I_C.
\end{equation}
Both inequalities are saturated. In particular,
$\DD'^{\dagger}(I_L)=I_C$, so $\DD'$ is TP. If $\DD'(X)=0$, injectivity of
$\CCal_p^{\opt}$, which follows from $Q_p\succ0$, gives $\trS X=0$ and hence
$\ker\DD'\subseteq\ker\trS$. Surjectivity of $\trS$ and the rank--nullity
theorem applied to $\DD'$ give
$d_L^4-d_L^2=\dim\ker\trS\geq\dim\ker\DD'\geq d_L^4-d_L^2$.
Thus the kernels agree and $\DD'$ is surjective.

Let $\KK(\omega):=\omega\otimes\kb0$, which is a quantum channel satisfying
$\trS\circ\KK=\id_L$, and set
$\Lambda:=\DD'\circ\KK$. For every $X\in\Lin(\HC)$,
$X-\KK(\trS X)$ lies in the common kernel, so
$\DD'(X)=\Lambda(\trS X)$. Substituting this identity into the original
factorization and using surjectivity of $\trS$ gives
\begin{equation}
\DD'=\Lambda\circ\trS,
\qquad \CCal\circ\Lambda=\CCal_p^{\opt}.
\end{equation}
The first identity and surjectivity of $\DD'$ make $\Lambda$ surjective, so it
is invertible on the finite-dimensional space $\Lin(\HL)$. Its definition as
$\DD'\circ\KK$ also shows that it is CPTP.

Since $Q_p\succ0$, $\CCal_p^{\opt}$ admits the CP left inverse
\begin{equation}
\bigl(\CCal_p^{\opt}\bigr)^{-1}(Y):=\frac{1}{d_L}
\trS\left(Q_p^{-1}YQ_p^{-1}\right),
\qquad Y\in\Lin(\HC).
\end{equation}
Here $\bigl(\CCal_p^{\opt}\bigr)^{-1}$ denotes a CP extension to $\Lin(\HC)$
of the inverse of $\CCal_p^{\opt}$ on its range, and direct substitution gives
$\bigl(\CCal_p^{\opt}\bigr)^{-1}\circ\CCal_p^{\opt}=\id_L$. Applying it to
$\CCal\circ\Lambda=\CCal_p^{\opt}$ gives
$\Lambda^{-1}=\bigl(\CCal_p^{\opt}\bigr)^{-1}\circ\CCal$, which is CP. Since
$\Lambda$ is TP and
bijective, its inverse is also TP, and hence $\Lambda$ is a unitary channel.
Thus the optimal pairs for $p\in(0,1)$ are exactly
\begin{equation}\label{eq:sm-optimal-gauge}
(\CCal,\DD')=(\CCal_p^{\opt}\circ\UCal^{\dagger},
\UCal\circ\trS),
\end{equation}
where $\UCal$ is a unitary channel. Conversely, every such pair has
composite $\BCal_p$ and is optimal.
Since $Q_p\succ0$, every optimal encoder in
Eq.~\eqref{eq:sm-optimal-gauge} maps a pure logical state $\rho$ to
$Q_p[\UCal^{\dagger}(\rho)\otimes I_S]Q_p$, which has rank $d_L\geq2$ and is
therefore a mixed code state.

\subsection{Unitality and covariance of the constructed quantum channel}
\label{sec:sm-noise-symmetry}

We now prove that $\NN_p$ is unital and covariant under the conjugate tensor
representation for every $0\leq p<1$. Fix $u\in U(d_L)$ and set
$W(u):=u\otimes u^*$ on $\HC$, where complex conjugation is taken in the
basis paired by $\ket{\psi_\tau}$. We claim that
\begin{equation}\label{eq:sm-noise-covariance}
\NN_p(W(u)\omega W(u)^{\dagger})
=W(u)\NN_p(\omega)W(u)^{\dagger}
\end{equation}
for every $\omega\in\Lin(\HC)$. Invariance of the maximally entangled state
under $W(u)$ gives
$[W(u),\hat{\pi}_C]=[W(u),Q_p]=[W(u),X_p]=0$. Moreover,
under input-first vectorization, conjugation by $W(u)$ is represented by
$W^*(u)\otimes W(u)$, with
$[W^*(u)\otimes W(u)]\dket{M}=\dket{W(u)MW(u)^{\dagger}}$. Hence
\begin{equation}\label{eq:sm-frame-covariance}
[W^*(u)\otimes W(u)]\dket{B_{ij}(p)}
=\sum_{k,\ell=0}^{d_L-1}u_{ki}^*u_{\ell j}\dket{B_{k\ell}(p)}.
\end{equation}
This unitary transformation law leaves $\Ran{P_p}$ invariant, giving
$[W^*(u)\otimes W(u),P_p]=0$ and hence the same commutation with $P_p^\perp$.
Moreover, $[W(u),X_p]=0$ implies
$[W^*(u)\otimes W(u),I_C\otimes X_p]=0$.
Using these commutation relations in
Eq.~\eqref{eq:sm-family-noise} gives
$[J(\NN_p^{\dagger}),W^*(u)\otimes W(u)]=0$, so the covariance criterion for
Choi matrices proves covariance of $\NN_p^{\dagger}$. Taking adjoints and
using that
$\{W(u):u\in U(d_L)\}$ is closed under adjoints proves
Eq.~\eqref{eq:sm-noise-covariance}.

It remains to prove unitality. Let $\tr_2$ denote the trace over the second
factor of the Choi matrix. For operators $A$ and $B$, input-first
vectorization yields
$\tr_2[\dket{A}\dbra{B}]=(A^{\dagger}B)^*$ and hence
$\tr_2J(\NN_p^{\dagger})=[\NN_p(I_C)]^*$. Since the $B_{ij}(p)$ are
Kraus operators of the quantum channel
$\BCal_p=\CCal_p^{\opt}\circ\DD^{\opt}$, we have
\begin{equation}\label{eq:sm-second-marginal-projector}
\tr_2P_p
=\left[\sum_{i,j}B_{ij}(p)^{\dagger}B_{ij}(p)\right]^*=I_C,
\qquad
\tr_2P_p^{\perp}=(d_L^2-1)I_C.
\end{equation}

Set $\hat{\eta}:=I_C\otimes\hat{\pi}_C$ and
$\hat{\psi}_{ij}:=\ket{\psi_\tau}\!\bra{ij}_C$. The vectors
$\{\dket{\hat{\psi}_{ij}}\}_{i,j}$ form an
orthonormal basis for the range of $\hat{\eta}$. Substituting the expression
for $Q_p$ from Eq.~\eqref{eq:sm-family-generator} yields
\begin{equation}\label{eq:sm-frame-overlap}
\begin{aligned}
\langle\!\langle B_{k\ell}(p)\vert\hat{\psi}_{ij}\rangle\!\rangle
&=\tr[B_{k\ell}(p)^{\dagger}\hat{\psi}_{ij}]
=\sqrt{\lambda_5}\delta_{ik}\delta_{j\ell},\\
B_{ij}(p)^{\dagger}\hat{\psi}_{ij}
&=\sqrt{\lambda_5}\kb{ij}_C.
\end{aligned}
\end{equation}
Together with the orthonormal basis
$\{\dket{B_{ij}(p)}\}_{i,j}$ of $\Ran{P_p}$, these identities imply
\begin{equation}\label{eq:sm-projected-eta-identities}
P_p\hat{\eta}P_p=\lambda_5P_p,
\qquad
\tr_2(P_p\hat{\eta})=\tr_2(\hat{\eta}P_p)=\lambda_5I_C.
\end{equation}
Expanding $P_p^{\perp}=I-P_p$ and using
Eqs.~\eqref{eq:sm-second-marginal-projector}
and~\eqref{eq:sm-projected-eta-identities}, together with
$\hat{\eta}=I_C\otimes\hat{\pi}_C$ and $\tr\hat{\pi}_C=1$, we obtain
\begin{equation}\label{eq:sm-complementary-marginal}
\tr_2(P_p^{\perp}\hat{\eta}P_p^{\perp})
=(1-\lambda_5)I_C
=\lambda_4I_C.
\end{equation}

Using Eqs.~\eqref{eq:sm-family-noise},~\eqref{eq:sm-xp-closed},~\eqref{eq:sm-second-marginal-projector},
and~\eqref{eq:sm-complementary-marginal}, the second marginal is
\begin{equation}\label{eq:sm-unital-derivation}
\begin{aligned}
\tr_2J(\NN_p^{\dagger})
&=\lambda_1\tr_2P_p+\lambda_3\tr_2P_p^{\perp}
+\frac{\lambda_2}{\lambda_4}
\tr_2(P_p^{\perp}\hat{\eta}P_p^{\perp})\\
&=\left[\lambda_1+(d_L^2-1)\lambda_3
+\frac{\lambda_2}{\lambda_4}\lambda_4\right]I_C\\
&=[\lambda_1+\lambda_2+(d_L^2-1)\lambda_3]I_C=I_C.
\end{aligned}
\end{equation}
The input-first identity above gives $\NN_p(I_C)=I_C$. Together with complete
positivity and the TP property established above, this shows that $\NN_p$ is a
unital and covariant CPTP map.

\subsection{The exact quadratic gap coefficient}\label{sec:sm-kappa}

The goal is to determine the globally optimized quadratic coefficient as
$p\to0^+$. We cast the rank-one gap as a compact residual problem and evaluate
its global tangent minimum. This route covers arbitrary encoder--decoder pairs
that depend on $p$, including changes in the decoder Choi matrix rank.

\medskip
\noindent\emph{Compact residual formulation.}
When $d_C\geq d_L$, a rank-one CPTNI encoder has the form
$\CCal_V=V(\cdot)V^{\dagger}$ with $V^{\dagger}V\preceq I_L$. For a state
$\rho$ and a fixed decoder, write
\begin{equation}\label{eq:sm-rank-one-objective}
K_{\CCal}:=J_{\rho^2}((\DD\circ\NN)^{\dagger})
\in\Pos(\HRef\otimes\HC),
\qquad
\widetilde{F}(V):=\Fe(\rho,\DD\circ\NN\circ\CCal_V)
=\dbra{V}K_{\CCal}\dket{V}.
\end{equation}

\begin{lemma}[partial-isometry reduction]
\label{lem:sm-partial-isometry-reduction}
Let $\rho$ be a state on $\HL$ with support projector $\PiRho$, and assume
$d_C\geq d_L$. If $K\in\Pos(\HRef\otimes\HC)$ and
$q(V):=\dbra{V}K\dket{V}$ satisfies $q(V)=q(V\PiRho)$, then the maximum of
$q$ over contractions $V:\HL\to\HC$ is attained by a partial isometry with
initial space $\Hr$. Equivalently, $V=V\PiRho$ and
$V^{\dagger}V=\PiRho$.
\end{lemma}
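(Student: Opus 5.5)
The plan is to use convexity of $q$ and push every singular value of $V$ to one. Since $K\succeq0$, the map $V\mapsto q(V)=\dbra{V}K\dket{V}$ is a convex quadratic form in $V$. The set of contractions $V:\HL\to\HC$ is compact and convex, so the maximum is attained; let $V_\star$ be a maximizer.

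First, remove the part of $V$ outside $\Hr$. The operator $V_\star\PiRho$ is again a contraction, and the hypothesis gives $q(V_\star\PiRho)=q(V_\star)$. We may therefore assume $V_\star=V_\star\PiRho$. The problem then reduces to maximizing $q$ over contractions $\Hr\to\HC$, extended by zero on $\ker\rho$. Set $r:=\rank\rho=\dim\Hr\leq d_L\leq d_C$.

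Next, write a full singular value decomposition
\[
V_\star=\sum_{k=1}^{r}\sigma_k u_kv_k^{\dagger},
\]
where $\{v_k\}$ is an orthonormal basis of $\Hr$, $\{u_k\}$ is an orthonormal set in $\HC$, and $0\leq\sigma_k\leq1$. The assumption $d_C\geq d_L\geq r$ is what guarantees $r$ orthonormal vectors $u_k$, including for the indices with $\sigma_k=0$. This is the only delicate point, because the completion of the polar factor needs room in $\HC$.

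Fix an index $k$ and consider the one-parameter family $V(t):=V_\star+(t-\sigma_k)u_kv_k^{\dagger}$ for $t\in[-1,1]$. Each $V(t)$ is a contraction with $V(t)=V(t)\PiRho$, and $t\mapsto q(V(t))$ is a convex real quadratic. Its maximum on $[-1,1]$ is therefore attained at $t=1$ or $t=-1$. Replace $\sigma_k$ by that endpoint; the value of $q$ does not decrease, so the new operator is still a maximizer. A sign $-1$ is absorbed by replacing $u_k$ with $-u_k$.

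Iterating over $k=1,\dots,r$ yields a maximizer $W=\sum_k u_kv_k^{\dagger}$. It satisfies $W=W\PiRho$ and $W^{\dagger}W=\sum_kv_kv_k^{\dagger}=\PiRho$, so $W$ is a partial isometry with initial space $\Hr$, which is the claim.
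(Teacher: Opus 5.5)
Your proof is correct. It rests on the same principle as the paper's: $K\succeq0$ makes $q$ convex, so a maximum is reached at an extreme point of the contraction ball on $\Hr$. Because $d_C\geq\dim\Hr$, those extreme points are exactly the partial isometries with initial space $\Hr$.

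The decomposition differs, though. The paper takes an arbitrary contraction $V$ and writes the polar form $V\PiRho=U_\rho S_\rho$ with $U_\rho^{\dagger}U_\rho=\PiRho$. It then expresses $V\PiRho=(V_++V_-)/2$ with $V_\pm:=U_\rho[S_\rho\pm i(\PiRho-S_\rho^2)^{1/2}]$. These are partial isometries because $S_\rho$ and $(\PiRho-S_\rho^2)^{1/2}$ commute. The identity $[q(V_+)+q(V_-)]/2-q(V\PiRho)=q(V_+-V_-)/4\geq0$ then selects one of just two candidates in a single step.

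You instead move one singular value at a time to a real endpoint $\pm1$, using convexity of a scalar quadratic on $[-1,1]$. This implicitly writes the operator as a convex combination of up to $2^r$ sign patterns rather than as the midpoint of two partial isometries, and it takes $r$ sequential comparisons.

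Your route is more elementary: it needs only an SVD and one-variable convexity, and it shows plainly that $d_C\geq r$ is what supplies $r$ orthonormal left singular vectors, including those for zero singular values. The paper's complex-phase trick gives a one-shot, explicit rounding with a closed-form convexity identity.

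Your coordinate moves never decrease $q$ whether or not $V_\star$ is a maximizer. So your argument, like the paper's, is really a rounding procedure valid for any contraction, with maximality used only at the end. For attainment of the maximum you need only compactness of the contraction set; its convexity plays no role there.
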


\begin{proof}
For any contraction $V$, set $V_\rho:=V\PiRho$ and
$S_\rho:=(V_\rho^{\dagger}V_\rho)^{1/2}$. Extend the polar partial isometry of
$V_\rho$ to a partial isometry $U_\rho:\HL\to\HC$ satisfying
$V_\rho=U_\rho S_\rho$ and $U_\rho^{\dagger}U_\rho=\PiRho$. This extension
exists because $d_C\geq d_L\geq\dim\Hr$. Since
$0\preceq S_\rho\preceq\PiRho$, let
$V_{\pm}:=U_\rho[S_\rho\pm i(\PiRho-S_\rho^2)^{1/2}]$. These operators obey
$V_{\pm}^{\dagger}V_{\pm}=\PiRho$ and
$V_\rho=(V_++V_-)/2$. Thus they are partial isometries with initial space
$\Hr$. Moreover, positivity of $K$ gives
$[q(V_+)+q(V_-)]/2-q(V_\rho)=q(V_+-V_-)/4\geq0$. Since
$q(V)=q(V_\rho)$, at least one of $V_\pm$ has objective value no smaller than
that of $V$. A maximizer exists by compactness of the contraction set, and
applying this replacement to it proves the claim.
\end{proof}

For $K_{\CCal}$ in Eq.~\eqref{eq:sm-rank-one-objective}, the support identity
$K_{\CCal}=(\PiRho^{\Tp}\otimes I_C)K_{\CCal}
(\PiRho^{\Tp}\otimes I_C)$ verifies the lemma's hypothesis.

For the remainder of this subsection, we specialize to the family
$(\tau,\NN_p)$ defined in the main text, with $d_L\geq2$ and $d_C=d_L^2$.
Because $\tau$ has full rank, an optimal rank-one encoder may therefore be
chosen as a partial-isometry encoder $\CCal_V$ with $V^{\dagger}V=I_L$.
Completing a CPTNI decoder $\DDtilde$ to a TP map cannot decrease the fidelity.
Indeed, adding the CP map
$X\mapsto\tr([I_C-\DDtilde^{\dagger}(I_L)]X)\kb{0}_L$ makes the decoder
TP and adds a nonnegative term to the objective. Thus an optimal decoder may
also be taken TP.
Set $d_E:=d_C d_L=d_L^3$, the maximal decoder Choi matrix rank, and
$\HE:=\Cbb^{d_E}$ with fixed orthonormal basis
$\{\ket{m}_E\}_{m=0}^{d_E-1}$.
Every CPTP decoder $\DDtilde:\Lin(\HC)\to\Lin(\HL)$ has a Stinespring
dilation $\Dtilde:\HC\to\HL\otimes\HE$ satisfying
$\Dtilde^{\dagger}\Dtilde=I_C$, with Kraus operators
$\bigl[\Dtilde\bigr]_m:=(I_L\otimes\bra{m}_E)\Dtilde$. Zero-padding shorter
Kraus representations includes every decoder Choi matrix rank in this fixed,
$p$-independent parameter space.

The compact feasible set is the product of two \emph{Stiefel manifolds}
\begin{equation}\label{eq:sm-stiefel-product}
\StiefelProd:=\{x=(\Dtilde,V):
\Dtilde^{\dagger}\Dtilde=I_C,V^{\dagger}V=I_L\}.
\end{equation}
Associate with each $x=(\Dtilde,V)\in\StiefelProd$ the residual vector
\begin{equation}\label{eq:sm-residual}
\ket{r(p,x)}
:=\sum_{m=0}^{d_E-1}\ket{m}_E\otimes
(I_C\otimes M_p^{1/2})P_p^{\perp}
\dket{V\bigl[\Dtilde\bigr]_m}.
\end{equation}
For the tangent analysis, regard the auxiliary residual space
$\HH_{E\otimes C\otimes C}:=\HE\otimes\HC\otimes\HC$ as a real inner-product
space by taking the real part, which leaves its norm unchanged. The auxiliary
certificate
channel $\BCal:=\CCal_V\circ\DDtilde$ is TP, has Kraus operators
$V\bigl[\Dtilde\bigr]_m$, and does not describe the physical order of the
encoder, noise, and decoder.
Consequently, the fixed-pair fidelity loss and the optimized gap satisfy
\begin{subequations}\label{eq:sm-optimized-gap-leakage}
\begin{align}
\Fopt(\tau,\NN_p)-\Fch(\DDtilde\circ\NN_p\circ\CCal_V)
&=\frac{1}{d_L^2}\tr[Z_pJ(\BCal)]
=\frac{1}{d_L^2}\lVert\ket{r(p,x)}\rVert_2^2,
\label{eq:sm-optimized-gap-leakage-a}\\
\Deltaopt(\tau,\NN_p)
&=\frac{1}{d_L^2}\min_{x\in\StiefelProd}
\lVert\ket{r(p,x)}\rVert_2^2.
\label{eq:sm-optimized-gap-leakage-b}
\end{align}
\end{subequations}
Because $\BCal$ is TP, Eqs.~\eqref{eq:sm-certificate-bound}
and~\eqref{eq:sm-family-optimum}, together with
$J(\BCal)=\sum_m\dket{V[\Dtilde]_m}\dbra{V[\Dtilde]_m}$ and
$Z_p=P_p^{\perp}(I_C\otimes M_p)P_p^{\perp}$, give the first line. Minimizing
it gives the second. The explicit formulas for $Q_p$, $P_p$, and $M_p$ extend
smoothly to $|p|<p_0$. Since $M_0=I_C$, continuity gives $M_p\succ0$ there, so
$p\mapsto M_p^{1/2}$ and $(p,x)\mapsto\ket{r(p,x)}$ are smooth near $p=0$.
This two-sided extension is used only for Taylor expansions, while the physical
limit is $p\to0^+$.

\medskip
\noindent\emph{Perfect pairs and symmetry reduction.}
Let
$\StiefelProd_0:=\{x\in\StiefelProd:\ket{r(0,x)}=0\}$. We first characterize
this set and then use symmetry to reduce all subsequent calculations to one
canonical pair.
At $p=0$, $M_0=I_C$ and
$\Ran{P_0}=\{\dket{I_L\otimes A}:A\in\Lin(\HS)\}$. Hence
$\ket{r(0,x)}=0$ exactly when there are operators $A_m\in\Lin(\HS)$ such that
$V\bigl[\Dtilde\bigr]_m=I_L\otimes A_m$ for every $m$.
Trace preservation makes
$\ECal(X):=\sum_m A_m X A_m^{\dagger}$ a quantum channel and
gives $\BCal=\id_L\otimes\ECal$. The operator
$\BCal(I_C)=I_L\otimes\ECal(I_S)$ is supported on the
space $\Ran{V}$ of dimension $d_L$, so $\ECal(I_S)$ has rank at most one.
Since $\tr[\ECal(I_S)]=d_L$, it has rank one. Thus the range of every Kraus
operator $A_m$ is contained in one common subspace of dimension one, so
$A_m=\ket{\chi}\!\bra{a_m}$ for a normalized $\ket{\chi}\in\HS$. Trace
preservation gives $\sum_m\kb{a_m}=I_S$ and hence
$\ECal(X)=\tr(X)\kb{\chi}$.

The support of $\BCal(I_C)$ is therefore
$\HL\otimes\operatorname{span}\{\ket{\chi}\}$, and it has the same
dimension as $\Ran{V}$. Equality of the supports gives
$V=(I_L\otimes\ket{\chi})U_L$ for a unitary $U_L$ on $\HL$. Multiplying
$\BCal(X)=V\DDtilde(X)V^{\dagger}$ by $V^{\dagger}$ and $V$ then gives
$\DDtilde(X)=U_L^{\dagger}\trS(X)U_L$. Let
$D_{\trS}:=\sum_{j=0}^{d_L-1}D_j\otimes\ket{j}_S$ be the minimal
Stinespring dilation of $\trS$. Since $D_{\trS}$ is minimal, the
Stinespring uniqueness theorem gives
$\Dtilde=(U_L^{\dagger}\otimes U_{S\to E})D_{\trS}$ for an isometry
$U_{S\to E}:\HS\to\HE$ satisfying
$U_{S\to E}^{\dagger}U_{S\to E}=I_S$. Consequently,
every $x=(\Dtilde,V)\in\StiefelProd_0$ can be represented by a normalized
$\ket{\chi}\in\HS$, a unitary $U_L\in U(d_L)$, and such an isometry
$U_{S\to E}$
as
\begin{equation}\label{eq:sm-zero-set}
\StiefelProd_0=\bigl\{\bigl((U_L^{\dagger}\otimes U_{S\to E})D_{\trS},
(I_L\otimes\ket{\chi})U_L\bigr):
\langle\chi\vert\chi\rangle=1,
U_L\in U(d_L),U_{S\to E}^{\dagger}U_{S\to E}=I_S\bigr\}.
\end{equation}
Conversely, every triple satisfying the displayed conditions gives a pair
that attains the unrestricted optimum at $p=0$.

For $\boldsymbol g:=(g_1,g_2,g_E)\in
U(d_L)\times U(d_L)\times U(d_E)$ and $x=(\Dtilde,V)$, define
\begin{subequations}\label{eq:sm-symmetry-transformations}
\begin{align}
\boldsymbol g(\Dtilde,V)
&:=\bigl((g_2^{\dagger}\otimes g_E)\Dtilde W(g_1)^{\dagger},W(g_1)Vg_2\bigr),
\label{eq:sm-symmetry-action}\\
(\ket{\chi},U_L,U_{S\to E})
&\longmapsto
(g_1^*\ket{\chi},g_1U_Lg_2,g_EU_{S\to E}g_1^{\Tp}),
\label{eq:sm-symmetry-parameters}\\
\ket{r(p,\boldsymbol g(x))}
&=(g_E\otimes W^*(g_1)\otimes W(g_1))\ket{r(p,x)}.
\label{eq:sm-residual-equivariance}
\end{align}
\end{subequations}
The unitary factors preserve the Stiefel constraints and Frobenius distances,
so this is a Frobenius-isometric action on all of $\StiefelProd$. The parameter
transformation preserves Eq.~\eqref{eq:sm-zero-set}, so the action maps
$\StiefelProd_0$ onto itself, and the residual equivariance holds throughout
$\StiefelProd$. Since its multiplier is unitary,
$\lVert\ket{r(p,\boldsymbol g(x))}\rVert_2
=\lVert\ket{r(p,x)}\rVert_2$ for every $p$.

Let $\hat{\mu}:\HS\hookrightarrow\HE$ be the canonical embedding specified by
$\hat{\mu}\ket{j}_S=\ket{j}_E$. For any point of $\StiefelProd_0$, choose
$g_1^*\ket{\chi}=\ket{0}$, $g_2=U_L^{\dagger}g_1^{\dagger}$, and
$g_EU_{S\to E}g_1^{\Tp}=\hat{\mu}$. The last choice exists because a unitary
on $\HE$ can map one orthonormal $d_L$ frame to another. Thus the action is
transitive on $\StiefelProd_0$ and maps every point to the canonical pair
\begin{equation}\label{eq:sm-canonical-zero}
x_0:=(\Dtilde_0,V_0),\qquad
\left\{
\begin{aligned}
\Dtilde_0&:=(I_L\otimes\hat{\mu})D_{\trS},\\
V_0&:=I_L\otimes\ket{0}_S.
\end{aligned}
\right.
\end{equation}
Residual equivariance identifies the first-order calculations at all points of
$\StiefelProd_0$, so the kernel and tangent minimum may be evaluated at $x_0$.
This zero set is a compact group orbit and hence an embedded submanifold of
$\StiefelProd$. We denote the tangent space of $\StiefelProd$ at $x_0$ by
$\TCal_0:=T_{x_0}\StiefelProd$.

\begin{lemma}[tangent vectors with vanishing quadratic coefficient]
\label{lem:sm-tangent-characterization}
Let $\LMap_0:\TCal_0\to\HH_{E\otimes C\otimes C}$ be the real-linear
first-order residual map at $x_0$. For any smooth feasible curve
$x(t)\in\StiefelProd$ with $x(0)=x_0$, its action on
$\dot x(0)\in\TCal_0$ is
$\LMap_0(\dot x(0)):=\left.\frac{\mathrm d}{\mathrm dt}
\ket{r(0,x(t))}\right|_{t=0}$. Its kernel satisfies
\begin{equation}\label{eq:sm-kernel-tangent-space}
\ker(\LMap_0)=T_{x_0}\StiefelProd_0.
\end{equation}
Moreover,
$\lVert\ket{r(0,x(t))}\rVert_2^2
=t^2\lVert\LMap_0(\dot x(0))\rVert_2^2+o(t^2)$, and the $p=0$ fidelity difference is
$\frac{t^2}{d_L^2}\lVert\LMap_0(\dot x(0))\rVert_2^2+o(t^2)$. Hence its quadratic
coefficient vanishes exactly for
$\dot x(0)\in T_{x_0}\StiefelProd_0$ and is strictly positive otherwise.
\end{lemma}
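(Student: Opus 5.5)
The plan is to treat the residual at $p=0$ as a smooth map $R:=\ket{r(0,\cdot)}:\StiefelProd\to\HH_{E\otimes C\otimes C}$ whose zero set is the orbit $\StiefelProd_0$. Taylor expansion along a feasible curve with $R(x_0)=0$ gives
\[
\ket{r(0,x(t))}=t\LMap_0(\dot x(0))+O(t^2).
\]
Squaring yields $\lVert\ket{r(0,x(t))}\rVert_2^2=t^2\lVert\LMap_0(\dot x(0))\rVert_2^2+o(t^2)$. The fidelity statement then follows from Eq.~\eqref{eq:sm-optimized-gap-leakage-a} at $p=0$, which supplies the factor $1/d_L^2$. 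The final sentence of the lemma is a restatement of the kernel identity, Eq.~\eqref{eq:sm-kernel-tangent-space}. So the whole content is that identity.

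The inclusion $T_{x_0}\StiefelProd_0\subseteq\ker\LMap_0$ is immediate. $R$ vanishes identically on $\StiefelProd_0$, which is an embedded submanifold because it is a compact group orbit, so differentiating along curves inside the orbit gives zero.

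For the reverse inclusion, I will compute $\LMap_0$ explicitly at $x_0$. At $p=0$ we have $M_0=I_C$, so
\[
R(x)=\sum_m\ket m_E\otimes P_0^{\perp}\dket{V[\Dtilde]_m}.
\]
Here $\Ran{P_0}=\{\dket{I_L\otimes A}:A\in\Lin(\HS)\}$. Write a tangent vector as $(\delta\Dtilde,\delta V)$, with constraints $\Dtilde_0^{\dagger}\delta\Dtilde+\delta\Dtilde^{\dagger}\Dtilde_0=0$ and $V_0^{\dagger}\delta V+\delta V^{\dagger}V_0=0$. Then $\LMap_0$ sends it to $\sum_m\ket m\otimes P_0^{\perp}\dket{\delta V[\Dtilde_0]_m+V_0[\delta\Dtilde]_m}$. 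The kernel condition therefore says that each first-order Kraus variation
\[
\delta B_m:=\delta V[\Dtilde_0]_m+V_0[\delta\Dtilde]_m
\]
has the form $I_L\otimes\delta A_m$.

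In the canonical frame, $[\Dtilde_0]_m=I_L\otimes\bra m_S$ for $m<d_L$ and vanishes otherwise, and $V_0=I_L\otimes\ket0$. I then decompose $\delta V=\sum_k Y_k\otimes\ket k$ and split $\delta\Dtilde$ into blocks relative to $\HE=\hat\mu(\HS)\oplus\hat\mu(\HS)^{\perp}$. The requirement $\delta B_m\in I_L\otimes\Lin(\HS)$ pins down the components:
\begin{itemize}
\item the $\ket{k}$-components with $k\neq0$ force $Y_k=c_kI_L$ for each $k\neq0$;
\item the $\ket0$-component forces the $\HL$-part of $[\delta\Dtilde]_m$, combined with $Y_0$, to be proportional to the identity.
\end{itemize}
The Stiefel tangent constraints should then make these scalar and anti-Hermitian parameters match the infinitesimal versions of the three orbit parameters in Eq.~\eqref{eq:sm-zero-set}: $\delta\chi$ (from the $c_k$ together with the phase in $Y_0$), $\delta U_L$ (anti-Hermitian on $\HL$, appearing as $Y_0$ and as $-\delta U_L$ in $\Dtilde$), and $\delta U_{S\to E}$ (free on $\hat\mu(\HS)^{\perp}$, anti-Hermitian constrained on $\hat\mu(\HS)$).

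As a cross-check, I will compare dimensions: the real dimension of $\ker\LMap_0$ should equal the dimension of the orbit $\StiefelProd_0$, i.e., the dimension of the group modulo the stabilizer of $x_0$. An alternative route is to show that $R$ meets its zero set cleanly, since the first-order kernel condition is exactly the linearization of the constraint $V[\Dtilde]_m=I_L\otimes A_m$ that characterized $\StiefelProd_0$.

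The main obstacle is the bookkeeping in this block computation. The trace-preservation constraint on $\delta\Dtilde$ couples to $\delta V$. In particular, a variation of $\Dtilde$ that mixes $\HL$ with the $S$-index must be shown to be forced into the $\delta U_L$ form rather than leaving extra kernel directions. I would handle this first by taking the partial trace over $\HL$ of the kernel condition. That yields scalar identities which, together with $V_0^{\dagger}\delta V$ anti-Hermitian, eliminate the spurious components.
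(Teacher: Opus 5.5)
Your proposal is correct and follows essentially the paper's route. You compute $\LMap_0$ blockwise at $x_0$ via traceless parts and obtain $[\dot V]_i=c_iI_L$ for $i\neq0$, $[\dot{\Dtilde}]_{m,k}=-\delta_{mk}[\dot V]_0+\alpha_{mk}I_L$ on active indices, and $[\dot{\Dtilde}]_{m,k}=b_{mk}I_L$ on inactive indices. You then identify these with the derivative of the zero-set parametrization in $(\ket{\chi},U_L,U_{S\to E})$, and you handle the easy inclusion and the Taylor expansion exactly as the paper does.

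The partial-trace step you anticipate is unnecessary. The kernel equations fix the traceless parts directly, and the Stiefel tangent constraints, together with anti-Hermiticity of $[\dot V]_0$, only make the scalar matrix $(\alpha_{mk})_{m,k<d_L}$ anti-Hermitian, which matches the constraint on $\dot U_{S\to E}$.
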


\begin{proof}
Let $x(t)=(\Dtilde(t),V(t))$ be a smooth feasible curve with $x(0)=x_0$.
At $t=0$, the dilation blocks satisfy
$\bigl[\Dtilde_0\bigr]_j=D_j$ for $0\leq j<d_L$, whereas
$\bigl[\Dtilde_0\bigr]_m=0$ for $m\geq d_L$.
We call $0\leq m<d_L$ and $d_L\leq m<d_E$ the active and inactive environment
indices, respectively. Dots denote derivatives at $t=0$, so
$\dot{\Dtilde}:=\left.\frac{\mathrm d}{\mathrm dt}\Dtilde(t)\right|_{t=0}$
and
$\bigl[\dot{\Dtilde}\bigr]_m:=(I_L\otimes\bra{m}_E)\dot{\Dtilde}$.
For block extraction, write
$\bigl[\dot{\Dtilde}\bigr]_{m,k}:=
(I_L\otimes\bra{m}_E)\dot{\Dtilde}(I_L\otimes\ket{k}_S)$ and
$\bigl[\dot V\bigr]_i:=(I_L\otimes\bra{i}_S)\dot V$.
The blocks $\bigl[\dot V\bigr]_i$ and
$\bigl[\dot{\Dtilde}\bigr]_{m,k}$ lie in
$\Lin(\HL)$. Every
$(\dot{\Dtilde},\dot V)\in\TCal_0$ satisfies
$\Dtilde_0^{\dagger}\dot{\Dtilde}
+\dot{\Dtilde}^{\dagger}\Dtilde_0=0$ and
$V_0^{\dagger}\dot V+\dot V^{\dagger}V_0=0$. In blocks, these conditions are
\begin{equation}\label{eq:sm-tangent-constraints}
\bigl[\dot{\Dtilde}\bigr]_{j,k}
+\left(\bigl[\dot{\Dtilde}\bigr]_{k,j}\right)^{\dagger}=0
\quad(0\leq j,k<d_L),
\qquad
\bigl[\dot V\bigr]_0+\left(\bigl[\dot V\bigr]_0\right)^{\dagger}=0,
\end{equation}
with no constraint on $\bigl[\dot{\Dtilde}\bigr]_{m,k}$ for $m\geq d_L$.
Conversely, these conditions characterize the tangent space of the two
Stiefel factors, so every such pair belongs to $\TCal_0$ and is realized by a
smooth feasible curve.

For $X\in\Lin(\HL)$, its traceless part is
\begin{equation}\label{eq:sm-traceless-part}
\tracelesspart{X}:=X-\frac{\tr(X)}{d_L}I_L.
\end{equation}
For $Y\in\Lin(\HC)$ and
$0\leq i,k<d_L$, write the blocks and their norm identity as follows:
\begin{subequations}\label{eq:sm-normal-blocks}
\begin{align}
\bigl[Y\bigr]_{i,k}
&:=(I_L\otimes\bra{i}_S)Y(I_L\otimes\ket{k}_S),
\label{eq:sm-block-definition}\\
\left\lVert P_0^{\perp}\dket{Y}\right\rVert_2^2
&=\sum_{i,k}\left\lVert\tracelesspart{\bigl[Y\bigr]_{i,k}}\right\rVert_2^2.
\label{eq:sm-block-norm}
\end{align}
\end{subequations}
The norm identity follows from
$P_0\dket{Y}=\dket{I_L\otimes\trL(Y)/d_L}$.
The real-linear map in Lemma~\ref{lem:sm-tangent-characterization} has the
explicit form
\begin{equation}\label{eq:sm-linearized-residual}
\LMap_0(\dot{\Dtilde},\dot V)
=\left.\frac{\mathrm d}{\mathrm dt}
\ket{r(0,x(t))}\right|_{t=0}
=\sum_{m=0}^{d_E-1}\ket{m}_E\otimes
P_0^{\perp}\dket{
\dot V\bigl[\Dtilde_0\bigr]_m
+V_0\bigl[\dot{\Dtilde}\bigr]_m}.
\end{equation}
By Eq.~\eqref{eq:sm-optimized-gap-leakage-a}, its squared norm divided by
$d_L^2$ is the quadratic coefficient. Substituting
$\bigl[\dot VD_j\bigr]_{i,k}=\delta_{jk}\bigl[\dot V\bigr]_i$ and
$\bigl[V_0\bigl[\dot{\Dtilde}\bigr]_m\bigr]_{i,k}
=\delta_{i0}\bigl[\dot{\Dtilde}\bigr]_{m,k}$ into
Eq.~\eqref{eq:sm-block-norm} gives
\begin{equation}\label{eq:sm-kernel-equations}
\tracelesspart{\bigl[\dot V\bigr]_i}=0
\quad(1\leq i<d_L),
\qquad
\tracelesspart{\delta_{jk}\bigl[\dot V\bigr]_0
+\bigl[\dot{\Dtilde}\bigr]_{j,k}}=0,
\qquad
\tracelesspart{\bigl[\dot{\Dtilde}\bigr]_{m,k}}=0
\quad(d_L\leq m<d_E),
\end{equation}
where $0\leq j,k<d_L$. Together with
Eq.~\eqref{eq:sm-tangent-constraints}, the complete solution is
\begin{equation}\label{eq:sm-kernel-solutions}
\begin{aligned}
\bigl[\dot V\bigr]_i&=c_iI_L
&& (1\leq i<d_L),\\
\bigl[\dot{\Dtilde}\bigr]_{m,k}&=
\begin{cases}
-\delta_{mk}\bigl[\dot V\bigr]_0+\alpha_{mk}I_L,&0\leq m<d_L,\\
b_{mk}I_L,&d_L\leq m<d_E,
\end{cases}
&& (0\leq k<d_L),
\end{aligned}
\end{equation}
where $\bigl[\dot V\bigr]_0$ and the matrix of scalars
$(\alpha_{mk})_{m,k<d_L}$ are
anti-Hermitian, while $c_i,b_{mk}\in\Cbb$ are arbitrary.

These solutions are precisely the tangent vectors to the exact zero set. Let
$\ket{\zeta}_S:=\sum_i\zeta_i\ket{i}_S$ be a normalized-state variation at
$\ket{0}$. Normalization gives $\Re\zeta_0=0$, and the common phase redundancy
in Eq.~\eqref{eq:sm-zero-set} permits the gauge $\zeta_0=0$. Choose
$\dot U_{S\to E}:\HS\to\HE$ with
$\hat{\mu}^{\dagger}\dot U_{S\to E}+
\dot U_{S\to E}^{\dagger}\hat{\mu}=0$ and write
$\bigl[\dot U_{S\to E}\bigr]_{m,k}:=
\bra{m}_E\dot U_{S\to E}\ket{k}_S$. Differentiating the zero-set
parametrization gives
\begin{equation}\label{eq:sm-exact-zero-tangent}
\bigl[\dot V\bigr]_i=\zeta_iI_L+\delta_{i0}\bigl[\dot V\bigr]_0,
\qquad
\bigl[\dot{\Dtilde}\bigr]_{m,k}=-\delta_{mk}\bigl[\dot V\bigr]_0
+\bigl[\dot U_{S\to E}\bigr]_{m,k}I_L.
\end{equation}
Taking $\zeta_0=0$, $\zeta_i=c_i$ for $1\leq i<d_L$,
$\bigl[\dot U_{S\to E}\bigr]_{m,k}=\alpha_{mk}$ for
$0\leq m,k<d_L$, and
$\bigl[\dot U_{S\to E}\bigr]_{m,k}=b_{mk}$ for $d_L\leq m<d_E$ and
$0\leq k<d_L$ realizes
every solution in Eq.~\eqref{eq:sm-kernel-solutions}. These variations are
integrated by normalized-state, unitary-exponential, and Stiefel curves in
the three parameters of Eq.~\eqref{eq:sm-zero-set}, proving
$\ker(\LMap_0)\subseteq T_{x_0}\StiefelProd_0$. Conversely, differentiating
$\ket{r(0,x(t))}=0$ along any zero-set curve gives the reverse inclusion and
establishes Eq.~\eqref{eq:sm-kernel-tangent-space}.

Finally, Eq.~\eqref{eq:sm-linearized-residual} gives
$\ket{r(0,x(t))}=t\LMap_0(\dot x(0))+o(t)$. Together with
Eqs.~\eqref{eq:sm-optimized-gap-leakage-a} and~\eqref{eq:sm-kernel-tangent-space},
this proves the quadratic expansion and
strict positivity stated in the lemma, including for variations of the padded
inactive Kraus operators.
\end{proof}

For matrix pairs $\xi=(\xi_{\Dtilde},\xi_V)$ and
$\eta=(\eta_{\Dtilde},\eta_V)$, use the real Frobenius inner product
\begin{equation}\label{eq:sm-real-frobenius-inner-product}
\langle\xi,\eta\rangle_{\mathrm F,\Rbb}
:=\Re\tr(\xi_{\Dtilde}^{\dagger}\eta_{\Dtilde})
+\Re\tr(\xi_V^{\dagger}\eta_V),
\end{equation}
with corresponding norm $\lVert\cdot\rVert_{\mathrm F}$. For
$x\in\StiefelProd$, define its ambient distance from the zero set by
$\operatorname{dist}_{\mathrm F}(x,\StiefelProd_0)
:=\min_{y\in\StiefelProd_0}\lVert x-y\rVert_{\mathrm F}$, and at the canonical
pair define
\begin{equation}\label{eq:sm-noise-direction}
\ket{\partial_p r}
:=\left.\frac{\partial}{\partial p}\ket{r(p,x_0)}\right|_{p=0}.
\end{equation}

\begin{lemma}[localization and tangent reduction]
\label{lem:sm-localization}
If $x_p$ is any global minimizer of
$x\mapsto\lVert\ket{r(p,x)}\rVert_2^2$ on $\StiefelProd$, then it approaches
the exact zero set at the linear rate
\begin{equation}\label{eq:sm-localization-rates}
\operatorname{dist}_{\mathrm F}(x_p,\StiefelProd_0)=\OO(p).
\end{equation}
This localization yields the lower bound over the tangent space
\begin{equation}\label{eq:sm-global-liminf}
\liminf_{p\to0^+}\frac{\Deltaopt(\tau,\NN_p)}{p^2}
\geq
\frac{1}{d_L^2}
\min_{v\in\TCal_0}
\lVert\ket{\partial_p r}+\LMap_0(v)\rVert_2^2.
\end{equation}
Here $v=(\dot{\Dtilde},\dot V)\in\TCal_0$ denotes a generic
dilation--encoder tangent pair, so
$\LMap_0(v)=\LMap_0(\dot{\Dtilde},\dot V)$, as shown in
Eq.~\eqref{eq:sm-linearized-residual}.
\end{lemma}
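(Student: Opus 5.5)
The plan is to prove a uniform linear lower bound on the unperturbed residual in terms of the distance to $\StiefelProd_0$, combine it with a crude $\OO(p)$ upper bound to localize $x_p$, and then Taylor-expand the residual in $(p,v)$ at $x_0$ to reach Eq.~\eqref{eq:sm-global-liminf}.

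\textbf{Upper bound.} Every $x\in\StiefelProd_0$ has $\ket{r(0,x)}=0$. In particular, $\ket{r(p,x_0)}=p\ket{\partial_pr}+\OO(p^2)$ by smoothness of $(p,x)\mapsto\ket{r(p,x)}$ near $p=0$. Hence any global minimizer obeys
\[
\lVert\ket{r(p,x_p)}\rVert_2\leq\lVert\ket{r(p,x_0)}\rVert_2\leq C_1p .
\]
Moreover, $\StiefelProd$ is compact and the extended residual is smooth on $[-p_0/2,p_0/2]\times\StiefelProd$. So it is Lipschitz in $p$ uniformly in $x$, giving
\[
\lVert\ket{r(0,x)}\rVert_2\leq\lVert\ket{r(p,x)}\rVert_2+C_2p .
\]
Combining the two at $x=x_p$ yields $\lVert\ket{r(0,x_p)}\rVert_2\leq(C_1+C_2)p$.

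\textbf{Linear error bound (main obstacle).} The key claim is
\[
\lVert\ket{r(0,x)}\rVert_2\geq c\,\operatorname{dist}_{\mathrm F}(x,\StiefelProd_0)\quad\text{for all }x\in\StiefelProd,
\]
with a single constant $c>0$. Combined with the previous step, it gives Eq.~\eqref{eq:sm-localization-rates}. I would argue in three pieces.

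\begin{itemize}
\item Near $x_0$: Lemma~\ref{lem:sm-tangent-characterization} states $\ker\LMap_0=T_{x_0}\StiefelProd_0$. Since $\StiefelProd_0$ is an embedded submanifold (a compact group orbit), $\LMap_0$ is injective on a complement $\mathcal N_0$ of $T_{x_0}\StiefelProd_0$ in $\TCal_0$. It is therefore bounded below there by some $c_0>0$. Choose a tubular-neighborhood chart $(y,n)\mapsto\exp_y(n)$ with $y\in\StiefelProd_0$ and $n$ normal. Since $r(0,y)=0$, Taylor expansion in $n$ gives $\lVert r(0,\exp_y n)\rVert\geq c_0\lVert n\rVert-C\lVert n\rVert^2$, and $\lVert n\rVert$ is comparable to the Frobenius distance from the zero set. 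This yields the bound in a neighborhood of $x_0$.
\item Uniformity along $\StiefelProd_0$: the neighborhood and constants can be taken uniform over all of $\StiefelProd_0$. The reason is that the action in Eq.~\eqref{eq:sm-symmetry-transformations} is Frobenius-isometric, transitive on $\StiefelProd_0$, and preserves $\lVert r(0,\cdot)\rVert$ by Eq.~\eqref{eq:sm-residual-equivariance}. So it transports the estimate at $x_0$ to every point of $\StiefelProd_0$.
\item Away from the zero set: outside the tube, $\lVert r(0,\cdot)\rVert$ is continuous and nonvanishing on a compact set, so it has a positive minimum. The distance is bounded by $\operatorname{diam}\StiefelProd$, which gives the bound there.
\end{itemize}

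\textbf{Tangent reduction.} Let $y_p$ be a nearest point of $\StiefelProd_0$ to $x_p$, and pick $\boldsymbol g_p$ with $\boldsymbol g_p(y_p)=x_0$. Replacing $x_p$ by $\boldsymbol g_p(x_p)$ changes neither the distance nor $\lVert r(p,\cdot)\rVert$. So I may assume $\lVert x_p-x_0\rVert_{\mathrm F}=\OO(p)$.

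Using a smooth retraction chart of $\StiefelProd$ at $x_0$, write $x_p=R_{x_0}(v_p)$ with $v_p\in\TCal_0$ and $\lVert v_p\rVert=\OO(p)$. A first-order expansion of $(p,v)\mapsto r(p,R_{x_0}(v))$ at $(0,0)$ gives
\[
\ket{r(p,x_p)}=p\ket{\partial_pr}+\LMap_0(v_p)+\OO(p^2),
\]
using $r(0,x_0)=0$ and $\lVert v_p\rVert=\OO(p)$. Divide by $p$ and set $w_p:=v_p/p\in\TCal_0$, which is bounded. Then
\[
\frac{\lVert\ket{r(p,x_p)}\rVert_2^2}{p^2}\geq\min_{v\in\TCal_0}\lVert\ket{\partial_pr}+\LMap_0(v)\rVert_2^2-\OO(p).
\]
This minimum exists because it is a linear least-squares problem. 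Applying Eq.~\eqref{eq:sm-optimized-gap-leakage-b} and taking $\liminf_{p\to0^+}$ proves Eq.~\eqref{eq:sm-global-liminf}.
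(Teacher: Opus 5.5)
Your argument is correct and follows essentially the same route as the paper: an $\OO(p)$ bound on $\lVert\ket{r(0,x_p)}\rVert_2$ from comparison with $x_0$, linear localization from the injectivity of $\LMap_0$ on the normal space (Lemma~\ref{lem:sm-tangent-characterization}) transported along $\StiefelProd_0$ by the transitive isometric group action, and a first-order Taylor expansion at $x_0$. The differences are only in packaging: you prove a global linear error bound $\lVert\ket{r(0,x)}\rVert_2\geq c\,\operatorname{dist}_{\mathrm F}(x,\StiefelProd_0)$ via the tubular-neighborhood theorem where the paper does an explicit closest-point chord estimate at the minimizers, and your per-$p$ least-squares bound over all of $\TCal_0$ replaces the paper's subsequence extraction in the normal space.
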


\begin{proof}
Let $\TCal_0^{\mathrm{nor}}
:=\TCal_0\cap(T_{x_0}\StiefelProd_0)^{\perp}$. This is the
normal space of $\StiefelProd_0$ in $\StiefelProd$ at $x_0$, and the
orthogonal complement is taken with respect to the real Frobenius inner
product in Eq.~\eqref{eq:sm-real-frobenius-inner-product}.
By Lemma~\ref{lem:sm-tangent-characterization}, the restriction of $\LMap_0$
to $\TCal_0^{\mathrm{nor}}$ is injective. This space is nonzero for
$d_L\geq2$, as seen by taking $\dot{\Dtilde}=0$, a nonzero traceless
$\bigl[\dot V\bigr]_1$, and all other blocks of $\dot V$ zero. Its finite
dimensionality therefore gives a positive smallest singular value $\sigma_*$:
\begin{equation}\label{eq:sm-transverse-singular-value}
\lVert\LMap_0(v)\rVert_2\geq\sigma_*\lVert v\rVert_{\mathrm F}
\qquad(v\in\TCal_0^{\mathrm{nor}}).
\end{equation}

Compactness of $\StiefelProd$ gives a global minimizer for every $p$. Since
$\ket{r(0,x_0)}=0$, Taylor expansion gives
$\ket{r(p,x_0)}=p\ket{\partial_p r}+\OO(p^2)$. Hence every global minimizer
$x_p$ obeys $\lVert\ket{r(p,x_p)}\rVert_2=\OO(p)$ and
$\Deltaopt(\tau,\NN_p)=\OO(p^2)$. Uniform smoothness on $\StiefelProd$ then gives
$\lVert\ket{r(0,x_p)}\rVert_2=\OO(p)$. The residual at $p=0$ has a positive
minimum outside any fixed neighborhood of its compact zero set, so
$\operatorname{dist}_{\mathrm F}(x_p,\StiefelProd_0)\to0$.
Choose a closest point $\bar x_p\in\StiefelProd_0$ and a group element
$\boldsymbol g_p$ such that $\boldsymbol g_p(\bar x_p)=x_0$, and set
$x_p':=\boldsymbol g_p(x_p)=(\Dtilde_p',V_p')$. The isometry and residual
equivariance of the group action preserve global optimality and give
$\lVert x_p'-x_0\rVert_{\mathrm F}
=\lVert x_p-\bar x_p\rVert_{\mathrm F}
=\operatorname{dist}_{\mathrm F}(x_p,\StiefelProd_0)
=\operatorname{dist}_{\mathrm F}(x_p',\StiefelProd_0)$.
Thus $x_0$ is a closest zero to $x_p'$. These choices are made separately for
each $p$ and need not be smooth. Set
$\delta x_p=(\delta\Dtilde_p,\delta V_p)
:=(\Dtilde_p'-\Dtilde_0,V_p'-V_0)$, so
$\lVert\delta x_p\rVert_{\mathrm F}
=\operatorname{dist}_{\mathrm F}(x_p',\StiefelProd_0)$.
For every smooth $c(t)\in\StiefelProd_0$ through $x_0$, closest-point
optimality gives
\begin{equation*}
0=\left.\frac{\mathrm d}{\mathrm dt}
\lVert x_p'-c(t)\rVert_{\mathrm F}^2\right|_{t=0}
=-2\langle\delta x_p,\dot c(0)\rangle_{\mathrm F,\Rbb}.
\end{equation*}
Hence $\delta x_p\perp T_{x_0}\StiefelProd_0$.

Although the chord $\delta x_p$ need not lie in $\TCal_0$, it differs from a
tangent pair only at second order. For either Stiefel factor
$A\in\{\Dtilde,V\}$, write $\delta A_p:=A_p'-A_0$ and set
\begin{equation*}
S_{A,p}:=\frac{1}{2}(A_0^{\dagger}\delta A_p
+\delta A_p^{\dagger}A_0)
=-\frac{1}{2}\delta A_p^{\dagger}\delta A_p,
\qquad
v_{A,p}:=\delta A_p-A_0S_{A,p}.
\end{equation*}
The second equality follows from the Stiefel constraints at $A_p'$ and $A_0$.
Thus $v_p:=(v_{\Dtilde,p},v_{V,p})\in\TCal_0$ and
$\lVert\delta x_p-v_p\rVert_{\mathrm F}
=\OO(\lVert\delta x_p\rVert_{\mathrm F}^2)$. Decompose
$v_p=v_p^{\parallel}+v_p^{\perp}$ with
$v_p^{\parallel}\in T_{x_0}\StiefelProd_0$ and
$v_p^{\perp}\in\TCal_0^{\mathrm{nor}}$. Since
$\delta x_p\perp T_{x_0}\StiefelProd_0$, orthogonal projection gives
$\lVert v_p^{\parallel}\rVert_{\mathrm F}
=\OO(\lVert\delta x_p\rVert_{\mathrm F}^2)$, and therefore
\begin{equation}\label{eq:sm-normal-chord-estimate}
\lVert\delta x_p-v_p^{\perp}\rVert_{\mathrm F}
=\OO(\lVert\delta x_p\rVert_{\mathrm F}^2),
\qquad
v_p^{\perp}\in\TCal_0^{\mathrm{nor}}.
\end{equation}

Smooth ambient Taylor expansion, followed by
Eqs.~\eqref{eq:sm-kernel-tangent-space} and~\eqref{eq:sm-normal-chord-estimate},
gives
\begin{equation*}
\ket{r(0,x_p')}
=\LMap_0(v_p^{\perp})+\OO(\lVert\delta x_p\rVert_{\mathrm F}^2).
\end{equation*}
Together with Eq.~\eqref{eq:sm-transverse-singular-value} and
$\lVert\ket{r(0,x_p')}\rVert_2=\OO(p)$, this gives, for a constant $C$
independent of $p$,
\begin{equation*}
\lVert\delta x_p\rVert_{\mathrm F}
\leq\lVert v_p^{\perp}\rVert_{\mathrm F}
+C\lVert\delta x_p\rVert_{\mathrm F}^2,
\qquad
\sigma_*\lVert v_p^{\perp}\rVert_{\mathrm F}
\leq Cp+C\lVert\delta x_p\rVert_{\mathrm F}^2.
\end{equation*}
Since $\lVert\delta x_p\rVert_{\mathrm F}\to0$, the quadratic terms can be
absorbed. Group invariance then proves Eq.~\eqref{eq:sm-localization-rates}
for every global minimizer, without choosing a smooth optimizer branch.

Choose a sequence $p\to0^+$ that realizes the limit inferior. From
Eq.~\eqref{eq:sm-localization-rates} and the normal chord estimate in
Eq.~\eqref{eq:sm-normal-chord-estimate},
$\lVert v_p^{\perp}\rVert_{\mathrm F}=\OO(p)$. Thus, along a subsequence,
$v_p^{\perp}/p\to v^{\perp}\in\TCal_0^{\mathrm{nor}}$, and joint Taylor
expansion gives
\begin{equation}\label{eq:sm-scaled-residual-expansion}
\frac{1}{p}\ket{r(p,x_p')}
=\ket{\partial_p r}
+\LMap_0\left(\frac{v_p^{\perp}}{p}\right)+\OO(p).
\end{equation}
Taking norms in Eq.~\eqref{eq:sm-scaled-residual-expansion} and using
Eq.~\eqref{eq:sm-optimized-gap-leakage-b} bounds the limit inferior below by
$d_L^{-2}$ times the minimum over $\TCal_0^{\mathrm{nor}}$. Because
$\TCal_0=(T_{x_0}\StiefelProd_0)\oplus\TCal_0^{\mathrm{nor}}$ and $\LMap_0$
vanishes on $T_{x_0}\StiefelProd_0$, this minimum agrees with the one over
$\TCal_0$, proving Eq.~\eqref{eq:sm-global-liminf}. The normal-space minimum
is attained because Eq.~\eqref{eq:sm-transverse-singular-value} makes its
objective coercive.
\end{proof}

\medskip
\noindent\emph{Evaluation and saturation.}
Lemma~\ref{lem:sm-localization} reduces the global problem to the last
minimum in
Eq.~\eqref{eq:sm-global-liminf}. Write
$G_{ij}:=I_L\otimes\ket{i}_S\!\bra{j}_S=\sqrt{d_L}B_{ij}(0)$ and
\begin{equation}\label{eq:sm-generator-expansion}
\beta:=\frac{d_L^2}{2(d_L^2-1)},
\qquad
\hat{\pi}_C^{\circ}:=\hat{\pi}_C-\frac{I_C}{d_L^2},
\qquad
\sqrt{d_L}Q_p=I_C-p\beta\hat{\pi}_C^{\circ}+\OO(p^2).
\end{equation}
Using $B_{ij}(p)=C_i(p)D_j$ and Eq.~\eqref{eq:sm-family-pair}, we obtain
$B_{ij}(p)=Q_pG_{ij}$, so
Eq.~\eqref{eq:sm-generator-expansion} gives
$\sqrt{d_L}B_{ij}(p)
=(I_C-p\beta\hat{\pi}_C^{\circ})G_{ij}+\OO(p^2)$.
Differentiating the frame relation
$P_p\dket{B_{0j}(p)}=\dket{B_{0j}(p)}$ gives
$\left.\partial_pP_p^{\perp}\right|_{p=0}\dket{G_{0j}}
=\beta P_0^{\perp}\dket{\hat{\pi}_C^{\circ}G_{0j}}$.
When differentiating the residual at $x_0$, the product rule term containing
$\left.\partial_pM_p^{1/2}\right|_{p=0}$ acts on
$P_0^{\perp}\dket{V_0\bigl[\Dtilde_0\bigr]_m}=0$ and therefore vanishes
for every $m$.
Consequently,
\begin{equation}\label{eq:sm-residual-derivative}
\ket{\partial_p r}
=\beta\sum_{j=0}^{d_L-1}\ket{j}_E\otimes
P_0^{\perp}\dket{\hat{\pi}_C^{\circ}G_{0j}}.
\end{equation}

Set $\pi_{0,L}^{\circ}:=\kb{0}_L-I_L/d_L$. The block definition in
Eq.~\eqref{eq:sm-block-definition} gives
\begin{equation}\label{eq:sm-generator-blocks}
\bigl[\hat{\pi}_C^{\circ}G_{0j}\bigr]_{i,k}
=\delta_{jk}\left[
\frac{1}{d_L}\ket{i}_L\!\bra{0}_L
-\frac{\delta_{i0}}{d_L^2}I_L\right].
\end{equation}
Combining Eq.~\eqref{eq:sm-block-norm} with the definition of $\LMap_0$ and
Eq.~\eqref{eq:sm-residual-derivative} gives
\begin{equation}\label{eq:sm-tangent-residual}
\begin{aligned}
\lVert\ket{\partial_p r}
+\LMap_0(\dot{\Dtilde},\dot V)\rVert_2^2
&=d_L\sum_{i=1}^{d_L-1}
\left\lVert\tracelesspart{\bigl[\dot V\bigr]_i}
+\frac{\beta}{d_L}\ket{i}_L\!\bra{0}_L\right\rVert_2^2\\
&\quad+\sum_{j,k=0}^{d_L-1}
\left\lVert\tracelesspart{\delta_{jk}\bigl[\dot V\bigr]_0
+\bigl[\dot{\Dtilde}\bigr]_{j,k}}
+\delta_{jk}\frac{\beta}{d_L}\pi_{0,L}^{\circ}\right\rVert_2^2
+\sum_{m=d_L}^{d_E-1}\sum_{k=0}^{d_L-1}
\left\lVert\tracelesspart{\bigl[\dot{\Dtilde}\bigr]_{m,k}}\right\rVert_2^2.
\end{aligned}
\end{equation}
The factor $d_L$ in the first sum counts the active environment rows, while the
Kronecker deltas restrict the forced defect in the second sum to $j=k$.
The first and third sums and the off-diagonal terms of the second sum are
nonnegative and admit zero. For a diagonal term, set
$A_j:=\tracelesspart{[\dot V]_0+[\dot{\Dtilde}]_{j,j}}$. The tangent
constraints make $A_j$ anti-Hermitian, whereas $\pi_{0,L}^{\circ}$ is
Hermitian and traceless, so
$\Re\tr[A_j^{\dagger}\pi_{0,L}^{\circ}]=0$. Thus each diagonal term is at
least $\beta^2\lVert\pi_{0,L}^{\circ}\rVert_2^2/d_L^2$. The $d_L$ diagonal
blocks and $\lVert\pi_{0,L}^{\circ}\rVert_2^2=(d_L-1)/d_L$ give the lower
bound $\beta^2(d_L-1)/d_L^2$.

The bounds are simultaneously attained by the feasible tangent
\begin{equation}\label{eq:sm-minimizing-tangent}
\dot{\Dtilde}=0,\qquad
\bigl[\dot V\bigr]_0=0,\qquad
\bigl[\dot V\bigr]_i=-\frac{\beta}{d_L}\ket{i}_L\!\bra{0}_L
\quad(1\leq i<d_L).
\end{equation}
It satisfies Eq.~\eqref{eq:sm-tangent-constraints}, makes the first and third
sums and all off-diagonal terms zero, and has $A_j=0$ for every $j$. Hence
\begin{equation}\label{eq:sm-tangent-minimum}
\min_{(\dot{\Dtilde},\dot V)\in\TCal_0}
\lVert\ket{\partial_p r}
+\LMap_0(\dot{\Dtilde},\dot V)\rVert_2^2
=\frac{\beta^2(d_L-1)}{d_L^2}
=\frac{d_L^2(d_L-1)}{4(d_L^2-1)^2}.
\end{equation}
This positive coefficient is the squared norm of the Hermitian component that
no feasible anti-Hermitian first-order correction can cancel.

This minimizing tangent is realized by an explicit feasible path. For a
normalized state $\ket{\chi}_S\in\HS$, set
\begin{equation}\label{eq:sm-polar-initializer}
C_p(\chi):=Q_p(I_L\otimes\ket{\chi}_S),
\qquad
V_p(\chi):=C_p(\chi)
[C_p(\chi)^{\dagger}C_p(\chi)]^{-1/2}.
\end{equation}
By unitary freedom in Kraus representations, $C_p(\chi)$ is a Kraus operator
in an equivalent Kraus representation of $\CCal_p^{\opt}$. Since $Q_p\succ0$,
$C_p(\chi)$ has full column rank and
$V_p(\chi)^{\dagger}V_p(\chi)=I_L$. Hence $\CCal_{V_p(\chi)}$ is a feasible
partial-isometry encoder. At $\ket{\chi}_S=\ket{0}_S$,
\begin{equation}\label{eq:sm-initializer-expansion}
V_p(\chi)
=V_0-p\beta(I_C-V_0V_0^{\dagger})\hat{\pi}_C^{\circ}V_0+\OO(p^2),
\end{equation}
whose derivative is the minimizing $\dot V$ in
Eq.~\eqref{eq:sm-minimizing-tangent}. Keeping the Stinespring dilation
$\Dtilde_0$ fixed gives $\dot{\Dtilde}=0$. This trial path gives the matching
upper bound
\begin{equation}\label{eq:sm-residual-minimum-limsup}
\limsup_{p\to0^+}\frac{\Deltaopt(\tau,\NN_p)}{p^2}
\leq\frac{d_L-1}{4(d_L^2-1)^2}.
\end{equation}
Covariance transports this construction to the corresponding reference pair
for every normalized state $\ket{\chi}_S$ without changing its asymptotic
residual.

\begin{corollary}[exact quadratic coefficient]
\label{cor:sm-exact-quadratic-coefficient}
As $p\to0^+$, the optimized rank-one gap and the exact infidelity in
Eq.~\eqref{eq:sm-family-optimum} give
\begin{equation}\label{eq:sm-delta-asymptotics}
\Deltaopt(\tau,\NN_p)
=\frac{d_L-1}{4(d_L^2-1)^2}p^2+o(p^2)
\Longrightarrow
\kgap
=\frac{d_L-1}{4}\left(\frac{d_L^2-2}{d_L^2-1}\right)^2.
\end{equation}
\end{corollary}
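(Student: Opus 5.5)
The plan is to sandwich $\Deltaopt(\tau,\NN_p)/p^2$ between matching lower and upper limits, both already essentially prepared above, and then change variables from $p$ to $\epsopt$.

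\emph{Lower limit.} I will invoke Lemma~\ref{lem:sm-localization}, Eq.~\eqref{eq:sm-global-liminf}. It bounds $\liminf_{p\to0^+}\Deltaopt/p^2$ below by $d_L^{-2}$ times the tangent minimum. That minimum was evaluated in Eq.~\eqref{eq:sm-tangent-minimum} as $d_L^2(d_L-1)/[4(d_L^2-1)^2]$. Multiplying by $d_L^{-2}$ gives
\[
\liminf_{p\to0^+}\frac{\Deltaopt(\tau,\NN_p)}{p^2}\geq\frac{d_L-1}{4(d_L^2-1)^2}.
\]

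\emph{Upper limit.} I will use the explicit feasible path $x(p):=(\Dtilde_0,V_p(0))$, where $V_p(0)$ is the polar initializer in Eq.~\eqref{eq:sm-polar-initializer}. This pair is a CPTP decoder together with a partial-isometry encoder, so it is admissible in the rank-one problem. By Eq.~\eqref{eq:sm-optimized-gap-leakage-a},
\[
\Deltaopt\leq d_L^{-2}\lVert\ket{r(p,x(p))}\rVert_2^2 .
\]
The map $(p,x)\mapsto\ket{r(p,x)}$ is smooth near $(0,x_0)$, the path satisfies $x(0)=x_0$, and $\ket{r(0,x_0)}=0$. A joint first-order Taylor expansion therefore gives
\[
\ket{r(p,x(p))}=p\bigl[\ket{\partial_p r}+\LMap_0(\dot x(0))\bigr]+\OO(p^2).
\]
By Eq.~\eqref{eq:sm-initializer-expansion}, $\dot x(0)$ is exactly the minimizing tangent in Eq.~\eqref{eq:sm-minimizing-tangent}. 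The bracket therefore has squared norm equal to the tangent minimum, and the $\limsup$ bound in Eq.~\eqref{eq:sm-residual-minimum-limsup} follows with the same constant.

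\emph{Conclusion.} Since the $\liminf$ and $\limsup$ coincide, the limit exists and
\[
\Deltaopt=\frac{d_L-1}{4(d_L^2-1)^2}\,p^2+o(p^2).
\]
By Eq.~\eqref{eq:sm-family-optimum}, $\epsopt=p/(d_L^2-2)$, so $p^2=(d_L^2-2)^2(\epsopt)^2$. Moreover $\epsopt\to0^+$ exactly when $p\to0^+$. Substituting yields
\[
\kgap=\frac{d_L-1}{4}\left(\frac{d_L^2-2}{d_L^2-1}\right)^2.
\]

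The main obstacle is verifying the expansion in Eq.~\eqref{eq:sm-initializer-expansion}. I would check it by computing
\[
C_p^{\dagger}C_p=d_L^{-1}\bigl[I_L-2p\beta V_0^{\dagger}\hat{\pi}_C^{\circ}V_0\bigr]+\OO(p^2),
\]
inverting the square root to first order, and observing that the component of $\hat{\pi}_C^{\circ}V_0$ lying inside $\Ran{V_0}$ cancels. What survives is
\[
-p\beta(I_C-V_0V_0^{\dagger})\hat{\pi}_C^{\circ}V_0 .
\]
Its blocks, using Eq.~\eqref{eq:sm-generator-blocks}, are $-(\beta/d_L)\ket{i}_L\!\bra{0}_L$ for $i\geq1$ and $0$ for $i=0$, which is the claimed minimizing tangent.
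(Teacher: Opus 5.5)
Your proposal is correct and follows the paper's proof: the $\liminf$ bound comes from Lemma~\ref{lem:sm-localization} combined with the tangent minimum in Eq.~\eqref{eq:sm-tangent-minimum}, the matching $\limsup$ bound comes from the polar-initializer path $(\Dtilde_0,V_p(0))$, whose tangent is Eq.~\eqref{eq:sm-minimizing-tangent}, and $\kgap$ follows by substituting $\epsopt=p/(d_L^2-2)$. Your first-order verification of Eq.~\eqref{eq:sm-initializer-expansion} is also correct; it fills in a computation the paper states without detail.
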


\begin{proof}
Substituting Eq.~\eqref{eq:sm-tangent-minimum} into the global lower bound
in Eq.~\eqref{eq:sm-global-liminf} and combining it with the feasible-path
upper bound in Eq.~\eqref{eq:sm-residual-minimum-limsup} give matching $\liminf$
and
$\limsup$, proving Eq.~\eqref{eq:sm-delta-asymptotics}. Dividing by the exact
squared infidelity in Eq.~\eqref{eq:sm-family-optimum} gives
Eq.~\eqref{eq:example-kappa}.
\end{proof}

\medskip
\noindent\emph{Relation to Theorem~\ref{thm:quadratic}.}
The explicit encoder above agrees exactly with that rank-one encoder
construction. Following the notation in the proof of
Theorem~\ref{thm:quadratic}, the Kraus operator $C_p(\chi)$ from
Eq.~\eqref{eq:sm-polar-initializer} has input weight $w_{\chi}=1/d_L$ and
relative fidelity defect
$x_{\chi}=1-\lambda_1=\epsopt(\tau,\NN_p)\leq1/d_L$. Thus the first polar
decomposition in that construction yields $V_p(\chi)$. Let
$\ket{\chi^*}_L$ denote the entrywise complex conjugate of $\ket{\chi}_S$ in
the bases identified by $\ket{i}_S\mapsto\ket{i}_L$, and set
$\Pi_{\chi^*}:=\kb{\chi^*}_L$. Let
$\Ktilde_{p,\chi}$ denote the contraction in the second step and write its
polar decomposition as
$\Ktilde_{p,\chi}=H_{p,\chi}U_{p,\chi}^{\dagger}$, where
$U_{p,\chi}$ is unitary. Invariance under unitary
changes of the composite Kraus representation and covariance give
\begin{equation}\label{eq:sm-family-second-polar}
\begin{aligned}
\Ktilde_{p,\chi}
&=k_{\parallel}(p)\Pi_{\chi^*}
+k_{\perp}(p)(I_L-\Pi_{\chi^*}),&
k_{\parallel}(p)^2
&=\frac{d_L\lambda_1(1+d_L\lambda_5)}{d_L+1},&
k_{\perp}(p)^2
&=\frac{d_L^2\lambda_1\lambda_4}{d_L^2-1}.
\end{aligned}
\end{equation}
The real coefficients $k_{\parallel}(p)$ and $k_{\perp}(p)$ are continuous
and equal to one at $p=0$. Their squared values remain positive for
$0\leq p<1$ because
$\lambda_1,\lambda_4,\lambda_5>0$. Hence they are positive throughout this
interval, so $\Ktilde_{p,\chi}\succ0$ and $U_{p,\chi}=I_L$. Therefore,
$W_p(\chi):=V_p(\chi)U_{p,\chi}=V_p(\chi)$. Thus
$(\CCal_{V_p(\chi)},\trS)$ realizes the globally
optimal quadratic coefficient of the rank-one gap. Higher order corrections
to the rank-one optimum remain undetermined, so no closed form for
$\Fopto(\tau,\NN_p)$ or $\Deltaopt(\tau,\NN_p)$ is obtained. Exact optimality
of this pair at fixed $p>0$ remains open.

\section{Numerical optimization and reproducibility}\label{sec:sm-algorithms}

In terms of the dressed Choi matrix in Eq.~\eqref{eq:sm-dressed-choi}, the
objective has two equivalent forms
\begin{equation}\label{eq:sm-dressed-objective}
\Fe(\rho,\DD\circ\NN\circ\CCal)
=\tr[J(\DD^{\dagger})J_{\rho^2}(\NN\circ\CCal)]
=\tr[J_{\rho^2}((\DD\circ\NN)^{\dagger})J(\CCal)].
\end{equation}
Both algorithms accept arbitrary feasible initial pairs. In multistart runs,
we retain the best feasible pair.

\subsection{Rank Unrestricted Optimization}\label{sec:sm-unrestricted-algorithm}

Completing the marginal of a CPTNI Choi matrix to equality adds a positive
semidefinite contribution to either objective in
Eq.~\eqref{eq:sm-dressed-objective}. Hence
each frozen optimum has a CPTP representative, and
Algorithm~\ref{alg:sm-unrestricted} alternates semidefinite programs~(SDPs)
with equality constraints.

\begin{algorithm}[H]
\caption{Unrestricted seesaw algorithm for $\Fopt(\rho,\NN)$.}
\label{alg:sm-unrestricted}
\small
\begin{algorithmic}[1]
\Require state $\rho$, CPTNI noise map $\NN$, starting feasible CPTP encoder
$J(\CCal)$ and decoder $J(\DD^{\dagger})$, tolerance
$\eps_{\mathrm{tol}}\in(0,1)$, and $k_{\max}\in\mathbb{N}^{+}$
\Ensure encoder $J(\CCal)$, decoder $J(\DD^{\dagger})$, and fidelity $F$
\State $k\gets0$ and $F_{\mathrm{old}}\gets-\infty$
\State $F\gets\Fe(\rho,\DD\circ\NN\circ\CCal)$
\While{$k<k_{\max}$ and $F-F_{\mathrm{old}}\geq\eps_{\mathrm{tol}}$ and $F<1$}
  \State $k\gets k+1$ and $F_{\mathrm{old}}\gets F$
  \State $K_{\DD}\gets J_{\rho^2}(\NN\circ\CCal)$
  \State $J(\DD^{\dagger})\gets
  \operatorname*{argmax}\{\tr(YK_{\DD}):Y\succeq0,\trL Y=I_C\}$
  \State $K_{\CCal}\gets J_{\rho^2}((\DD\circ\NN)^{\dagger})$
  \State $J(\CCal)\gets
  \operatorname*{argmax}\{\tr(YK_{\CCal}):Y\succeq0,\trC Y=I_L\}$
  \State $F\gets\Fe(\rho,\DD\circ\NN\circ\CCal)$
\EndWhile
\State \Return $J(\CCal)$, $J(\DD^{\dagger})$, and $F$
\end{algorithmic}
\end{algorithm}

Each SDP update is globally optimal, so $F$ is nondecreasing. The joint
problem is nonconvex, and convergence gives no global certificate. If $\rho$
is rank deficient, the objective ignores $\ker\rho$, although the
quantum-channel constraints still act on all of $\HL$.

\subsection{Optimization with a rank-one encoder}
\label{sec:sm-rank-one-algorithm}

For the rank-one restriction, we use the partial-isometry parametrization and
reduction in Lemma~\ref{lem:sm-partial-isometry-reduction}, together with the
frozen objective in Eq.~\eqref{eq:sm-rank-one-objective}.

\begin{algorithm}[H]
\caption{Rank-one restricted seesaw algorithm for
$\Fopto(\rho,\NN)$.}\label{alg:sm-rank-one}
\small
\begin{algorithmic}[1]
\Require state $\rho$, CPTNI noise map $\NN$, starting partial isometry
$V=V\PiRho$ with $V^{\dagger}V\preceq\PiRho$, starting feasible CPTP decoder
$J(\DD^{\dagger})$, tolerance $\eps_{\mathrm{tol}}\in(0,1)$, and
$k_{\max}\in\mathbb{N}^{+}$
\Ensure support partial isometry $V$, decoder $J(\DD^{\dagger})$, and fidelity
$F$
\State $k\gets0$ and $F_{\mathrm{old}}\gets-\infty$
\State $F\gets\Fe(\rho,\DD\circ\NN\circ\CCal_V)$
\While{$k<k_{\max}$ and $F-F_{\mathrm{old}}\geq\eps_{\mathrm{tol}}$ and $F<1$}
  \State $k\gets k+1$ and $F_{\mathrm{old}}\gets F$
  \State $K_{\DD}\gets J_{\rho^2}(\NN\circ\CCal_V)$
  \State $J(\DD^{\dagger})\gets
  \operatorname*{argmax}\{\tr(YK_{\DD}):Y\succeq0,\trL Y=I_C\}$
  \State $K_{\CCal}\gets J_{\rho^2}((\DD\circ\NN)^{\dagger})$
  \State $\widetilde{V}\gets\operatorname{unvec}(K_{\CCal}\dket{V})\PiRho$
  \State $V\gets\widetilde{V}
  (\widetilde{V}^{\dagger}\widetilde{V})^{-1/2}$
  \Comment{Moore--Penrose inverse square root}
  \State $F\gets\dbra{V}K_{\CCal}\dket{V}$
\EndWhile
\State \Return $V$, $J(\DD^{\dagger})$, and $F$
\end{algorithmic}
\end{algorithm}

Here ``$\operatorname{unvec}$'' inverts $\dket{\cdot}$. Lines 8 and 9 implement the
support-restricted Reimpell--Werner iteration~(RW) for a single Kraus
operator~\cite{Reimpell2005}. The RW iteration is known to resemble a power
method and does not increase the Choi matrix rank. For a compact singular value
decomposition
$\widetilde{V}=U\Sigma W^{\dagger}$, the Moore--Penrose inverse square root is
$(\widetilde{V}^{\dagger}\widetilde{V})^{-1/2}
=W\Sigma^{-1}W^{\dagger}$. Thus the update is
$V_{\mathrm{new}}:=\widetilde{V}(\widetilde{V}^{\dagger}\widetilde{V})^{-1/2}
=UW^{\dagger}$. It is the polar partial isometry and satisfies
$V_{\mathrm{new}}^{\dagger}V_{\mathrm{new}}\preceq\PiRho$. It also maximizes
$\Re\tr[(V')^{\dagger}\widetilde{V}]$ over
$(V')^{\dagger}V'\preceq\PiRho$. Because $V$ is feasible for this
maximization, setting $\delta V:=V_{\mathrm{new}}-V=\delta V\PiRho$ gives
$\Re\dbra{\delta V}K_{\CCal}\dket{V}
=\Re\tr(\delta V^{\dagger}\widetilde{V})\geq0$.
Expanding the objective at $V_{\mathrm{new}}$ gives
\begin{equation}
\widetilde{F}(V_{\mathrm{new}})-\widetilde{F}(V)
=2\Re\dbra{\delta V}K_{\CCal}\dket{V}
+\dbra{\delta V}K_{\CCal}\dket{\delta V}\geq0.
\end{equation}
Here the first term is nonnegative by the polar maximization above, and the
second is nonnegative because $K_{\CCal}\succeq0$. The decoder SDP half-step
also cannot decrease the fidelity. Thus the fidelity values, though not
necessarily the encoder--decoder iterates, form a nondecreasing sequence
bounded by one and hence converge.

\subsection{Numerical implementation}\label{sec:sm-numerics}

For the family $(\tau,\NN_p)$, Fig.~\ref{fig:gap}(b) uses logarithmic sweeps
toward $p=0$ and linear sweeps toward $p=1$ for $d_L=2,\ldots,11$. The
unrestricted optimum $\Fopt$ is known analytically from
Eq.~\eqref{eq:sm-family-optimum}. The rank-one curves are obtained by
multistart runs of the rank-one restricted seesaw algorithm
(Algorithm~\ref{alg:sm-rank-one}), using the RW iteration for the decoder
half-step and a singular value decomposition for the encoder update.
Interior point algorithms independently cross-check selected decoder
subproblems. For each run, the initial encoder is sampled at random from the
Stiefel manifold of partial isometries.

For any rank-one partial-isometry encoder $\CCal_1$ used in the sweep, let
$\DD$ be its reoptimized TP decoder and set $\BCal:=\CCal_1\circ\DD$.
Because $\BCal$ is TP, Eq.~\eqref{eq:example-certificate} gives the plotted
loss as $\Delta_{\mathrm{plot}}=\tr[Z_pJ(\BCal)]/d_L^2$. This trace expression avoids
subtracting fidelities close to one, and the curves stop before
$\Delta_{\mathrm{plot}}$ becomes comparable to double precision roundoff.
The channel fidelity attained by every feasible numerical pair is a lower
bound on $\Fopto$, so the corresponding $\Delta_{\mathrm{plot}}$ is an upper
bound on $\Deltaopt$. Runs from multiple initializations converge to the same
value of $\Delta_{\mathrm{plot}}$ within a numerical uncertainty of $\pm10^{-12}$. This
agreement supports the numerical stability of the plotted estimates but does
not provide an analytic global certificate at fixed $p$.
Independently, the rank obstruction and
Eq.~\eqref{eq:sm-delta-asymptotics} establish the positive gap and its
asymptotic coefficient.

\end{document}